\theoremstyle{plain}
\newtheorem{thm}{Theorem}[section]
\newtheorem{lem}[thm]{Lemma}
\newtheorem{prop}[thm]{Proposition}
\theoremstyle{definition}
\theoremstyle{remark}
\newtheorem{exmp}[thm]{Example}
\newtheorem{rem}[thm]{Remark}
 \DeclareMathOperator*{\esssup}{esssup}
\DeclareMathOperator*{\essinf}{essinf}
\newcommand{\twert}[2]{\!\!\begin{array}[t]{r}#1\\[-1.00ex]{\scriptstyle(#2)}\end{array}\!\!}
\newcommand{\twertt}[2]{\!\!\begin{array}[t]{c}#1\\[-1.00ex]{\scriptstyle #2 }\end{array}\!\!}
\begin{document}

\title{A primal-dual algorithm for BSDEs}

\author{Christian Bender$^{1}$, Nikolaus Schweizer$^1$, and Jia Zhuo$^2$}

 \maketitle
\footnotetext[1]{Saarland University, Department of Mathematics,
Postfach 151150, D-66041 Saarbr\"ucken, Germany, {\tt
bender@math.uni-sb.de}; {\tt
schweizer@math.uni-sb.de}. \\ Financial support by the Deutsche Forschungsgemeinschaft under grant BE3933/5-1 is gratefully acknowledged.} \footnotetext[2]{       Department of
Mathematics, University of Southern California, 3620 S. Vermont Ave., KAP 104
Los Angeles, CA 90089-2532, {\tt  jiazhuo@usc.edu}.}

\begin{abstract}
We generalize the primal-dual methodology, which is popular in the pricing of early-exercise options, to a 
backward dynamic programming equation associated with time discretization schemes of (reflected) backward stochastic differential equations (BSDEs).
Taking as an input some approximate solution of the backward dynamic program, which was pre-computed, e.g., by least-squares Monte Carlo, this methodology allows to construct 
a confidence interval for the unknown true solution of the time discretized (reflected) BSDE at time $0$. We numerically demonstrate the practical applicability of our method
in two five-dimensional nonlinear pricing problems where tight price bounds were previously unavailable. 
\\[0.2cm] \emph{Keywords:} Backward SDE, numerical approximation, Monte Carlo, option pricing.
\\[0.2cm] \emph{AMS classification:} 65C30, 65C05, 91G20, 91G60.
\end{abstract}

\section{Introduction}

In this paper we aim at constructing tight confidence intervals for the solution $(Y_i)_{i=0,\ldots,n}$ of a dynamic programming equation of the form
\begin{equation}\label{RBSDE}
Y_i=\max\{S_i,E_i[Y_{i+1}] +f(i,Y_i, E_i[\beta_{i+1} Y_{i+1}])\Delta_i\},\quad Y_n=S_n
\end{equation}
at time $i=0$. Dynamic programming equations of the form (\ref{RBSDE}) naturally arise in time discretization schemes for (reflected) BSDEs. We assume that $(\Omega,\mathcal{F},(\mathcal{F}_i)_{i=0,\ldots n},P)$ is a filtered probability space and $E_i[\cdot]$ denotes conditional expectation given $\mathcal{F}_i$. The reflecting 
barrier $S$ is an adapted $\mathbb{R}\cup\{-\infty\}$-valued process, $\beta$ is an adapted $\mathbb{R}^D$-valued process related to the driver of the BSDE 
(e.g. suitably truncated and normalized increments of a $D$-dimensional Brownian motion), the generator $f:\Omega \times \{0,\ldots, n-1\} \times \mathbb{R} \times \mathbb{R}^D \rightarrow \mathbb{R}$
is an adapted random field, and $\Delta_i$ are constants which can be thought of as the stepsizes of a time discretization scheme.  Appropriate integrability and continuity assumptions will be specified later on. 

The special case $f\equiv 0$ of (\ref{RBSDE}) is the well-known recursion for the valuation of Bermudan options. In the wake of the financial crisis, there is an increased interest in
 `small' nonlinearities in pricing. These are due, e.g.,  to counterparty risk or funding risk -- and  had largely been neglected in practice. 
 Building on the BSDE literature and its early pricing applications such as \citet{Berg}, \citet{DuffieEtAl} or the examples in \citet{EKPQ}, the number of pricing problems which have been formulated as BSDEs -- and thus have a discretization of the form (\ref{RBSDE}) -- is steadily growing. 
 Recent examples include funding risk \citep{laurent2012overview, crepey2013counterparty}, counterparty risk \citep{crepey2013counterparty, henry2012counterparty}, model uncertainty \citep{guyon2010uncertain, alanko2013reducing},
 and hedging under transaction costs \citep{guyon2010uncertain}. 
In some of these examples the nonlinearity depends on the delta or the gamma of the option, which can be incorporated in our discrete time setting by choosing the weights $\beta$ 
appropriately.
  The aim of the present paper is to provide a unified and numerically efficient framework for calculating upper and lower price bounds for these problems -- parallel to the well-known primal-dual bounds in Bermudan option pricing.

 The error due to the time discretization (\ref{RBSDE}) for BSDEs driven by a Brownian motion has been 
thoroughly analyzed in the literature under various regularity conditions. We refer to \cite{Zh,BT,GL,GM} for the non-reflected case (corresponding to $S_i \equiv -\infty$ for $i<n$)
and to \cite{BP,MZ,BC} for the reflected case. We emphasize that the results in the present paper can also be applied to the time discretization schemes for BSDEs driven by a Brownian motion with generators with quadratic growth 
as in \cite{CR}, time discretization schemes for BSDEs with jumps considered in \cite{BE08}, and the time discretization scheme for fully nonlinear parabolic PDEs by \cite{FTW}.  

A standard procedure for solving an equation of type (\ref{RBSDE}) numerically is the so-called approximate 
dynamic programming approach. Here, the conditional expectations in (\ref{RBSDE}) are replaced by some approximate conditional expectations operator.
The main difficulty of this approach is, that in each step backwards in time a conditional expectation must be computed numerically, building on the 
approximate solution one step ahead. This leads to a high order nesting of conditional expectations. Hence, it is crucial that the approximate conditional expectations operator 
 can be nested several times without exploding computational cost. Among the techniques which have been applied and analyzed in the
 context of BSDEs driven by a (high-dimensional) Brownian motion are 
least-squares Monte Carlo \citep{LGW,BD}, quantization \citep{BP}, Malliavin Monte Carlo \citep{BT}, cubature on Wiener space \citep{CM}, and sparse grid methods \citep{ZGZ}.

Although convergence rates are available in the literature for these different methods, the quality of the numerical approximation in the practically relevant pre-limit situation is typically 
difficult to assess. Generalizing the primal-dual methodology, which was introduced by \cite{AB} in the context Bermudan option pricing, we suggest to take the numerical solution of the approximate dynamic program as
an input, in order to construct a confidence interval for $Y_0$ via a Monte Carlo approach. In a nutshell, the rationale is to find a maximization problem and a minimization problem with value $Y_0$,
for which optimal controls are available in terms of the true solution $(Y_i)_{i=0,\ldots,n}$ of the dynamic program (\ref{RBSDE}). Using the approximate solution instead of the true one, 
then yields suboptimal controls for these two optimization problems. If the numerical procedure in the approximate dynamic program was successful, these controls are close to optimal and lead to tight lower 
and upper bounds for $Y_0$. Unbiased estimators for the lower and the upper bound can finally be computed by plain Monte Carlo, which results in a confidence interval for $Y_0$. 
\cite{BSUQ} provides a different a posteriori criterion for BSDEs which is better suited for qualitative convergence analysis than for deriving quantitatively meaningful bounds on $Y_0$. The two approaches are thus complimentary.

The paper is organized as follows: In Section 2 we briefly discuss some basic properties of the dynamic programming equation (\ref{RBSDE}) and show how it arises in our two numerical examples, funding risk and counterparty risk. The case of a convex generator $f$ is treated 
in Section 3. In Section 3.1 we first suggest a pathwise approach to the dynamic programming equation (\ref{RBSDE}) which avoids the evaluation of conditional expectations 
in the backward recursion 
in time. This pathwise approach depends on the choice of a $(D+1)$-dimensional martingale and leads to the construction 
of supersolutions for  (\ref{RBSDE}) and to a minimization problem over martingales 
with value process $Y_i$. We then note in Section 3.2
that, due to convexity, $Y_i$ can also be represented as the supremum over a class of classical optimal stopping problems. 
This representation can be thought of 
as a discrete time, reflected analogue of a result in \cite{EKPQ} for continuous time, non-reflected BSDEs driven by a Brownian motion. 
If we think of this maximization problem as the primal problem, then the pathwise approach can be interpreted as a dual minimization problem in the sense 
of information relaxation. This type of duality was first
 introduced independently by \cite{Ro} and \cite{HK} in the context of Bermudan option pricing, and was later extended by \cite{BSS} 
to general discrete time stochastic control problems. Finally, in Section 3.3 we provide some discussion of how the tightness of the bounds depends on the quality of the 
input approximations used in their construction.

In Section 4.1 we explain, how the representations for $Y_0$ as the value of a 
 maximization and a minimization problem can be exploited in order to construct confidence intervals for $Y_0$ via Monte Carlo simulation. This algorithm generalizes the primal-dual algorithm of \cite{AB}
from optimal stopping problems (i.e., the case $f\equiv 0)$   to the case of convex  generators. We also suggest some generic control variates which 
turn out to be powerful in our numerical examples. Numerical examples for the pricing of a European and a Bermudan option on the maximum of five assets under different interest rates for borrowing and lending (funding risk)
are presented in Section 4.2. For constructing the input approximations, we apply the least-squares Monte Carlo algorithm of \cite{LGW} and its martingale basis variant by \cite{BS2} with just a few (up to seven) basis functions. This turns out to be sufficient for achieving very tight 95\% confidence intervals with relative error of typically less than 1\% between lower and 
upper confidence bound in our five-dimensional test examples. 

For non-convex generators $f$, we suggest in Sections 5.1 and 5.2 to apply the input approximation of the approximate dynamic program in order to construct an auxiliary generator $f^{up}$, which is convex and dominates $f$, and
another one $f^{low}$, which is concave and dominated by $f$. This construction can be done in a way that (evaluated at the true solution) $f^{up}$ and $f^{low}$ converge to $f$, when the input approximation of the approximate dynamic program
approaches the true solution. The methods of Section 3 and a corresponding result for the concave case can then be applied to the convex generator $f^{up}$ and to the concave generator $f^{low}$ in order to build 
a confidence interval for $Y_0$ in the general case. In Section 5.3, we test the performance of this algorithm in two applications, the previous example of funding risk and a model of counterparty credit risk where the driver is neither concave nor convex. Again, tight price bounds can be achieved. Appendix A sets our discrete time results into the context of their continuous time analogues.

\section{Discrete time reflected BSDEs}

Suppose $(\Omega,\mathcal{F},(\mathcal{F}_i)_{i=0,\ldots n},P)$ is a filtered probability space in discrete time. We consider the discretized version of a reflected
BSDE of the form (\ref{RBSDE}). Throughout the paper we make the following assumptions:  The time increments $\Delta_i$, $i=0,\ldots, n-1$, are positive real numbers.  $S$ is an adapted process with values in $\mathbb{R}\cup \{-\infty\}$ such that 
$$
\sum_{i=0}^{n-1} E[|S_i{\bf 1}_{\{S_i>-\infty\}}|] + E[|S_n|]<\infty.
$$ 
The random field $f:\Omega \times \{0,\ldots, n-1\} \times \mathbb{R} \times \mathbb{R}^D \rightarrow \mathbb{R}$ is measurable, $f(\cdot,y,z)$ is adapted for every $(y,z)\in \mathbb{R} \times \mathbb{R}^D$,
and $
\sum_{i=1}^{n-1} E[|f(i,0,0)|]<\infty.
$
Moreover, there are adapted, nonnegative processes $\alpha^{(d)}$, $d=0,\ldots, D$ such that the stochastic Lipschitz condition
$$
|f(i,y,z)-f(i,y',z')|\leq \alpha^{(0)}_i |y-y'|+\sum_{d=1}^D  \alpha^{(d)}_i |z_d-z_d'|
$$
holds for every $(y,z), (y',z') \in \mathbb{R} \times \mathbb{R}^D$. Finally, $\beta$ is a bounded, adapted $\mathbb{R}^D$-valued process and the following relations hold:
\begin{eqnarray}
 \alpha^{(0)}_i < \frac{1}{\Delta_i} , \qquad   \sum_{d=1}^D  \alpha^{(d)}_i |\beta_{d,i+1}| \leq \frac{1}{\Delta_i}. \label{Lip2}\label{Lip1}
\end{eqnarray}
A straightforward contraction mapping argument shows that under these assumptions there exists a unique adapted and integrable process $Y$ such that  (\ref{RBSDE}) is satisfied.

\begin{exmp}\label{exmp:intro}
To illustrate the setting let us introduce the two nonlinear pricing problems which also appear in our numerical experiments: Pricing with different interest rates for borrowing and lending, and pricing in a reduced-form model of counterparty credit risk. Going back to \citet{Berg}, the first one is a standard
example in the BSDE literature. \cite{laurent2012overview} have recently emphasized its practical relevance in the context of funding risk. Following the financial crisis there has also been increased interest in credit risk models similar to the second example, see \citet{pallavicini2012funding, crepey2013counterparty, henry2012counterparty} and the references therein.

(i) Let there be a financial market with two riskless and $D$ risky assets. The prices of the risky assets $X_t^1,\ldots X_t^D$ evolve according to
\[
dX_t^d=X_t^d\left(\mu_t^d dt+ \sum_{k=1}^D \sigma_t^{d,k} dW_t^k \right),
\]
where $W$ is a standard $D$-dimensional Brownian motion, and where $\mu$ and $\sigma$ are predictable and bounded processes. Moreover, $\sigma$ is assumed to be a.s. invertible with bounded inverse. The filtration is given by the usual augmented Brownian filtration. The two riskless assets have bounded and predictable short rates $R^l_t$ and $R^b_t$ with $R^l_t \leq R^b_t$ a.s. These are the interest rates for lending and borrowing, i.e., an investor can only hold positive positions in the first one, and only negative ones in the second. 
Consider a square-integrable European claim $h(X_T)$ with maturity $T$. It is well-known that a replicating portfolio for $h(X_T)$ is characterized by two processes $Y_t\in \mathbb{R}$ and $Z_t\in \mathbb{R}^D$ which solve the BSDE
 \begin{equation}\label{RRBSDE}
 dY_t= -f(t, Y_t,Z_t)dt+ Z_t^\top dW_t
 \end{equation}
with terminal condition $Y_T= h(X_T)$ where 
 \[
 f(t,y,z)= -R_t^l y - z^\top \sigma_t^{-1}\left(\mu_t-R_t^l \bar{1} \right)+ (R_t^b-R_t^l)\left(y-z^\top \sigma_t^{-1}\bar{1}\right)_-,   
 \]
see \citet{Berg} or the survey paper of \citet{EKPQ}.
Here, $\bar{1}$ denotes the vector containing only ones in $\mathbb{R}^D$ and $^\top$ denotes matrix transposition. The function $f$ is both convex and  Lipschitz continuous. Our key quantity of interest is the claim's fair price at time $0$ given by  $Y_0$.  $Z^\top_t \sigma_t^{-1}$ corresponds to the vector of amounts of money invested in the risky assets at time $t$.
 Discretizing time and taking conditional expectations gives a recursion of the form (\ref{RBSDE}) for $Y$, see  \cite{Zh,BT}. In this European case $S_i \equiv -\infty$ for $i<n$ and $S_n\equiv h(X_T)$.  The discretization of the $Z$-part is given by $Z_i=E_i\left[\frac{W_{t_{i+1}}-W_{t_i}}{t_{i+1}-t_i } Y_{i+1}\right]$. This corresponds to a vector of Malliavin derivatives of $Y$ and is thus naturally related to a delta hedge.
 In view of (\ref{RBSDE}) we would thus like to choose $\beta_{i+1}=(W_{t_{i+1}}-W_{t_i})/(t_{i+1}-t_i)$. However, in order to fulfill condition (\ref{Lip2}) we have to truncate the Brownian increments at some value. 
 Since $(W_{t_{i+1}}-W_{t_i})/(t_{i+1}-t_i)$ is a vector of normal random variables with standard deviation of order $(t_{i+1}-t_i)^{-\frac12}$, 
 we can make this truncation error small as the time discretization gets finer, taking into account Lipschitz continuity and the factor $\Delta_i=t_{i+1}-t_i$ outside $f$, see e.g. \cite{LGW}. 
 For a Bermudan or American claim, (\ref{RRBSDE}) is replaced by a suitable reflected BSDE. In the discretization, $S_i$ is then the payoff from exercising at time $t_i$.
 
(ii) The second example is a special case of the model of counterparty credit risk due to \cite{DuffieEtAl}. We change the setting of (i) by assuming there is only one riskless asset with rate $R_t$ which can be both borrowed and lent. Moreover, we consider a risk-neutral valuation framework, i.e., $\mu_t=R_t \bar{1}$.
Given a square-integrable European claim $h(X_T)$ with maturity $T$, we denote by $Y_t$ the claim's fair price at time $t$ conditional on no default having occurred yet. The claim's possible default is modelled through a stopping time which is the first jump time of a Poisson process with intensity $Q_t$. 
Here, $Q_t=Q(Y_t)$ is a decreasing, continuous and bounded function of  $Y_t$, i.e., if the claim's value is low, default becomes more likely. 
If default occurs at time $t$, the claim's holder receives a fraction $\delta \in [0,1)$ of the current value $Y_t$. Following Proposition 3 in \cite{DuffieEtAl}, the value process is then characterized by the nonlinear relation
\[
Y_t=E_t \left[\int_t^T f(s,Y_s)ds+h(X_T)\right],
\]
where $f(t,y)= -(1-\delta)Q(y)y-R_t y$. 
Discretizing naturally leads to an equation of type (\ref{RBSDE}) with $\beta\equiv 0$.  Condition (\ref{Lip2}) then reduces to the requirement that the time discretization is sufficiently fine. 
\end{exmp}

The dynamic programming equation (\ref{RBSDE}) implies that the solution $Y$ also solves the optimal stopping problem
$$
Y_i=\esssup_{\tau \in \mathcal{S}_{i}} E_i\left[S_\tau+\sum_{j=i}^{\tau-1} f(j,Y_j, E_j[\beta_{j+1} Y_{j+1}])\Delta_j\right],\quad i=0,\ldots, n,
$$
where $\mathcal{S}_i$ is the set of stopping times with values bigger or equal to $i$. This optimal stopping problem is unusual in the sense that the reward upon stopping depends 
on the Snell envelope $Y_j$. Note that one can pose restrictions on the set of admissible stopping times by 
choosing the set $\{(i,\omega);\; S_i(\omega)=-\infty\}$, at which exercise is never optimal. We can hence restrict the supremum in this optimal stopping problem 
to the subset $\bar{\mathcal{S}}_{i}\subset \mathcal{S}_{i}$ of stopping times $\tau$ which take values in $\mathcal{E}(\omega)=\{i;\; S_i(\omega)>-\infty\}$.
 An optimal stopping time is given by
\begin{equation}\label{tau*}
\tau^*_i=\inf\{j\geq i;\; S_j\geq E_j[Y_{j+1}]+f(j,Y_j, E_j[\beta_{j+1}Y_{j+1}])\Delta_j  \}\wedge n
\end{equation}
We also note the following alternative representation of $Y_i$ via optimal stopping of a nonlinear functional.
\begin{prop}\label{prop:stopping}
 For every $i=0,\ldots,n$, 
$$
Y_i=\esssup_{\tau \in \bar{\mathcal{S}}_{i}} Y^{(\tau)}_i
$$
where $(Y^{(\tau)}_j)_{j\geq i}$ solves the dynamic programming equation
$$
Y^{(\tau)}_j=E_j[Y^{(\tau)}_{j+1}] +f(j,Y^{(\tau)}_j, E_j[\beta_{j+1} Y^{(\tau)}_{j+1}])\Delta_j, \; i\leq j < \tau,\quad \ Y^{(\tau)}_\tau=S_\tau
$$
Moreover, the stopping time $\tau^*_i$, defined in (\ref{tau*}) is optimal.
\end{prop}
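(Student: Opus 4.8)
The plan is to prove the two inequalities $Y_i\ge Y^{(\tau)}_i$ for every $\tau\in\bar{\mathcal{S}}_i$ and $Y_i=Y^{(\tau_i^*)}_i$, which together with $\tau_i^*\in\bar{\mathcal{S}}_i$ yield the representation and the optimality of $\tau_i^*$. The workhorse is the one-step map behind the recursion: for $j\in\{0,\dots,n-1\}$ and an integrable $\mathcal{F}_{j+1}$-measurable $V$, the bound $\alpha^{(0)}_j\Delta_j<1$ from (\ref{Lip1}) makes $U\mapsto E_j[V]+f(j,U,E_j[\beta_{j+1}V])\Delta_j$ a contraction, so it has a unique integrable $\mathcal{F}_j$-measurable fixed point $U=:\Phi_j(V)$. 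This is the same contraction argument that gives well-posedness of $Y$; running it backwards from the terminal prescription $Y^{(\tau)}_\tau=S_\tau$ defines $Y^{(\tau)}$, with $Y^{(\tau)}_j=\Phi_j(Y^{(\tau)}_{j+1})$ on the $\mathcal{F}_j$-event $\{j<\tau\}$.

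The key property I would establish is that $\Phi_j$ is monotone and local: if $V\le V'$ on some $A\in\mathcal{F}_j$, then $\Phi_j(V)\le\Phi_j(V')$ on $A$. Telescoping the stochastic Lipschitz estimate coordinate by coordinate, there are $\mathcal{F}_j$-measurable $\lambda,\gamma_1,\dots,\gamma_D$ with $|\lambda|\le\alpha^{(0)}_j$ and $|\gamma_d|\le\alpha^{(d)}_j$ such that $f(j,\Phi_j(V'),E_j[\beta_{j+1}V'])-f(j,\Phi_j(V),E_j[\beta_{j+1}V])=\lambda(\Phi_j(V')-\Phi_j(V))+\sum_{d=1}^D\gamma_d E_j[\beta_{d,j+1}(V'-V)]$, and subtracting the two defining equations gives
\[
(\Phi_j(V')-\Phi_j(V))(1-\lambda\Delta_j)=E_j\Big[(V'-V)\Big(1+\textstyle\sum_{d=1}^D\gamma_d\Delta_j\beta_{d,j+1}\Big)\Big].
\]
Here $1-\lambda\Delta_j>0$ by the first inequality in (\ref{Lip1}), and $1+\sum_{d=1}^D\gamma_d\Delta_j\beta_{d,j+1}\ge 0$ by the second (it forces $\sum_{d=1}^D|\gamma_d\Delta_j\beta_{d,j+1}|\le1$), so the right-hand side is nonnegative; locality follows by replacing $V,V'$ with $V\mathbf{1}_A,V'\mathbf{1}_A$. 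I would also record the equivalent form $Y_j=\max\{S_j,\Phi_j(Y_{j+1})\}$ of (\ref{RBSDE}) for $j<n$ — a one-line uniqueness argument, again using $\alpha^{(0)}_j\Delta_j<1$, turns the implicit maximum into this one — so that $Y_j\ge S_j$ wherever $S_j$ is finite, $Y_j\ge\Phi_j(Y_{j+1})$, $Y_j=\Phi_j(Y_{j+1})$ on $\{\tau_i^*>j\}$, and $Y_j=S_j$ on $\{\tau_i^*=j\}$.

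For the upper bound I would run a backward induction over $j=n,n-1,\dots,i$ showing $Y^{(\tau)}_j\le Y_j$ on $\{\tau\ge j\}$. At $j=n$ both sides equal $S_n$ on $\{\tau=n\}$. For the step, $\{\tau\ge j\}$ splits into $\{\tau=j\}$, where $Y^{(\tau)}_j=S_j\le Y_j$ since $S_j>-\infty$ on $\bar{\mathcal{S}}_i$, and $\{\tau\ge j+1\}\in\mathcal{F}_j$, where $Y^{(\tau)}_j=\Phi_j(Y^{(\tau)}_{j+1})\le\Phi_j(Y_{j+1})\le Y_j$ by the induction hypothesis and the local monotonicity of $\Phi_j$. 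Taking $j=i$ and then the essential supremum over $\tau\in\bar{\mathcal{S}}_i$ gives $Y_i\ge\esssup_{\tau\in\bar{\mathcal{S}}_i}Y^{(\tau)}_i$.

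For the converse I would first check $\tau_i^*\in\bar{\mathcal{S}}_i$: on $\{\tau_i^*=j\}$ with $j<n$, (\ref{tau*}) gives $S_j\ge E_j[Y_{j+1}]+f(j,Y_j,E_j[\beta_{j+1}Y_{j+1}])\Delta_j>-\infty$, while $S_n$ is finite by assumption. A backward induction over $k=n,\dots,i$ then shows $Y^{(\tau_i^*)}_k=Y_k$ on $\{\tau_i^*\ge k\}$: at $k=n$ both equal $S_n$; on $\{\tau_i^*=k\}$ both equal $S_k$; and on $\{\tau_i^*\ge k+1\}\in\mathcal{F}_k$ one has $Y^{(\tau_i^*)}_k=\Phi_k(Y^{(\tau_i^*)}_{k+1})=\Phi_k(Y_{k+1})=Y_k$ by the induction hypothesis and locality of $\Phi_k$. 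At $k=i$ this reads $Y_i=Y^{(\tau_i^*)}_i$, whence $Y_i=Y^{(\tau_i^*)}_i\le\esssup_{\tau\in\bar{\mathcal{S}}_i}Y^{(\tau)}_i\le Y_i$, proving the identity and the optimality of $\tau_i^*$. The one step I expect to be delicate is the monotonicity-and-locality of $\Phi_j$: it is a one-step, stochastically Lipschitz comparison theorem and is exactly where both parts of (\ref{Lip1}) are genuinely used; the two inductions are then routine bookkeeping with the events $\{\tau\ge j\}$ and $\{\tau=j\}$ and the uniqueness behind $\Phi_j$.
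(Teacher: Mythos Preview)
Your argument is correct and is essentially the paper's approach: the paper deduces the proposition as a direct consequence of the comparison result (Proposition~\ref{prop:comparison}), whose proof is the same Lipschitz-based monotonicity argument that you carry out inline via the one-step map $\Phi_j$. The only organizational difference is that the paper isolates the comparison step as a separate statement, whereas you build it into the backward inductions; the linearization you use for monotonicity of $\Phi_j$ is a minor variant of the direct Lipschitz estimate in the paper's proof of Proposition~\ref{prop:comparison}.
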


This representation is a direct consequence of the following simple, but useful, comparison theorem. For nonreflected discrete time BSDEs  related 
comparison results can be found in \citet{CE} and \citet{CS} under different sets of assumptions.
\begin{prop}\label{prop:comparison}
 Suppose there are stopping times $\sigma\leq \tau $ such that for every $\sigma\leq  i < \tau$ 
\begin{eqnarray*}
 Y_i^{up}&\geq&\max\{S_i,E_i[Y^{up}_{i+1}] +f(i,Y^{up}_i, E_i[\beta_{i+1} Y^{up}_{i+1}])\Delta_i\} \\ 
 Y_i^{low}&\leq&\max\{S_i,E_i[Y^{low}_{i+1}] +f(i,Y^{low}_i, E_i[\beta_{i+1} Y^{low}_{i+1}])\Delta_i\}
\end{eqnarray*}
and $Y_\tau^{up}\geq Y_\tau^{low}$.
Then, under the standing assumptions, $Y_i^{low}\leq Y^{up}_i$ holds for every  $\sigma\leq  i \leq \tau$.
\end{prop}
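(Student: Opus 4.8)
The plan is to prove the inequality by a downward induction over \emph{deterministic} times $i$. For fixed $i\in\{0,\dots,n\}$ put $A_i:=\{\sigma\le i\le\tau\}$; since $\sigma$ and $\tau$ are stopping times, $A_i\in\mathcal F_i$, and the assertion of the proposition is exactly that $Y^{low}_i\le Y^{up}_i$ holds on $A_i$ for every such $i$. I would establish this by backward induction on $i$, tacitly assuming --- as is implicit in the hypotheses, since the conditional expectations in the supersolution and subsolution inequalities must be well defined --- that $Y^{up}$ and $Y^{low}$ are integrable.

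For the induction step, decompose $A_i=(A_i\cap\{\tau=i\})\cup\{\sigma\le i<\tau\}$. On $A_i\cap\{\tau=i\}$ the inequality $Y^{low}_i\le Y^{up}_i$ is precisely the terminal hypothesis $Y^{low}_\tau\le Y^{up}_\tau$; this also settles the start of the induction, $i=n$, where $A_n=\{\tau=n\}$. On the $\mathcal F_i$-measurable event $\{\sigma\le i<\tau\}$ one has $\{\sigma\le i<\tau\}\subseteq A_{i+1}$, so the induction hypothesis yields $Y^{low}_{i+1}\le Y^{up}_{i+1}$ there, and the supersolution and subsolution inequalities for $Y^{up}$ and $Y^{low}$ hold. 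It then suffices to show that the $\mathcal F_i$-measurable set $D:=\{\sigma\le i<\tau\}\cap\{Y^{low}_i>Y^{up}_i\}$ is a null set. On $D$ the supersolution property gives $Y^{up}_i\ge S_i$, hence $Y^{low}_i>S_i$, so the maximum in the subsolution inequality is not realised by the barrier and
\begin{equation*}
Y^{low}_i\le E_i[Y^{low}_{i+1}]+f(i,Y^{low}_i,E_i[\beta_{i+1}Y^{low}_{i+1}])\,\Delta_i,\qquad Y^{up}_i\ge E_i[Y^{up}_{i+1}]+f(i,Y^{up}_i,E_i[\beta_{i+1}Y^{up}_{i+1}])\,\Delta_i.
\end{equation*}
Subtracting, multiplying by $\mathbf 1_D$, drawing $\mathbf 1_D$ inside the conditional expectations (legitimate since $D\in\mathcal F_i$), using $\mathbf 1_D(Y^{low}_{i+1}-Y^{up}_{i+1})\le0$, and applying the stochastic Lipschitz estimate on $f$, I expect to reach
\begin{equation*}
(1-\alpha^{(0)}_i\Delta_i)\,\mathbf 1_D(Y^{low}_i-Y^{up}_i)\le E_i\Bigl[\mathbf 1_D\,|Y^{low}_{i+1}-Y^{up}_{i+1}|\Bigl(\Delta_i\textstyle\sum_{d=1}^D\alpha^{(d)}_i|\beta_{d,i+1}|-1\Bigr)\Bigr].
\end{equation*}
By (\ref{Lip2}) the factor in the inner parenthesis is $\le0$, so the right-hand side is $\le0$, while by (\ref{Lip1}) the coefficient $1-\alpha^{(0)}_i\Delta_i$ is strictly positive; hence $\mathbf 1_D(Y^{low}_i-Y^{up}_i)\le0$ almost surely, which together with $Y^{low}_i-Y^{up}_i>0$ on $D$ forces $P(D)=0$. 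This completes the induction step and hence the proof.

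The one place that needs care is the treatment of $f(i,Y_i,E_i[\beta_{i+1}Y_{i+1}])$, where the unknown $Y_i$ enters $f$ directly and the future values enter through the $z$-slot: one has to move the resulting $\alpha^{(0)}_i|Y^{low}_i-Y^{up}_i|$ contribution to the left and absorb it using $\alpha^{(0)}_i\Delta_i<1$, and one has to exploit the sign of $Y^{low}_{i+1}-Y^{up}_{i+1}$ on $D$ to bound $\mathbf 1_D\,|E_i[\beta_{d,i+1}(Y^{low}_{i+1}-Y^{up}_{i+1})]|$ by $E_i[\mathbf 1_D\,|\beta_{d,i+1}|\,|Y^{low}_{i+1}-Y^{up}_{i+1}|]$, so that the constant $1$ coming from $E_i[\mathbf 1_D(Y^{low}_{i+1}-Y^{up}_{i+1})]=-E_i[\mathbf 1_D\,|Y^{low}_{i+1}-Y^{up}_{i+1}|]$ on $D$ cancels the $z$-contribution via (\ref{Lip2}); this is exactly why the two conditions in (\ref{Lip1})/(\ref{Lip2}) appear. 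The remaining ingredients --- $\mathcal F_i$-measurability of $A_i$, $\{\sigma\le i<\tau\}$ and $D$, the inclusion $\{\sigma\le i<\tau\}\subseteq A_{i+1}$, and the passage from the stopping-time formulation to fixed $i$ --- are routine.
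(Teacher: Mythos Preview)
Your proof is correct and follows essentially the same approach as the paper's: backward induction over deterministic times, a case split according to whether the barrier is active, and the same combination of the Lipschitz estimate with the two conditions in (\ref{Lip2}) to absorb the $y$- and $z$-contributions. The only cosmetic difference is that you argue by contradiction on the set $D=\{\sigma\le i<\tau\}\cap\{Y^{low}_i>Y^{up}_i\}$, whereas the paper works directly with the signed difference $\Delta Y_i=(Y^{up}_i-Y^{low}_i)\mathbf 1_{\{\sigma\le i\le\tau\}}$ and shows $\Delta Y_i\ge0$.
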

\begin{proof}
We define, for $i=1,\ldots,n$,
$$
\Delta Y_i=(Y_i^{up}-Y_i^{low}){\bf 1}_{\{\sigma \leq i \leq \tau\}}.
$$
It is sufficient to show that $\Delta Y_i\geq 0$ for every $i=1,\ldots,n$. 
 We prove this assertion by backward induction and note that it holds in the case $i=n$ by assumption. Now, 
suppose that $\Delta Y_{i+1}\geq 0$ is already shown. Then, 
on the set $\{Y_i^{low}>S_i\}\cap \{\sigma\leq i<\tau\}$ we obtain by the Lipschitz assumption on $f$,
\begin{eqnarray*}
\Delta Y_i &\geq &  E_i[\Delta Y_{i+1}] +(f(i,Y^{up}_i, E_i[\beta_{i+1} Y^{up}_{i+1}])-f(i,Y^{low}_i, E_i[\beta_{i+1} Y^{low}_{i+1}]))\Delta_i \\
&\geq &E_i\left[\Delta Y_{i+1}\left(1- \sum_{d=1}^D  \alpha^{(d)}_i |\beta_{d,i+1}| \Delta_i\right)\right] - \alpha_i^{(0)} |\Delta Y_i| \Delta_i\\
&\geq & -\alpha_i^{(0)} |\Delta Y_i| \Delta_i,
\end{eqnarray*}
which yields $\Delta Y_i\geq 0$. On the set $\{Y_i^{low}\leq S_i\}\cup \{i\geq\tau\}\cup \{i<\sigma\}$, the inequality $\Delta Y_i\geq 0$ is obvious.
\end{proof}

\section{The case of a convex generator}\label{SecConvex}

In Sections 3.1 and 3.2,  we discuss how to construct `tight' supersolutions and subsolutions to the dynamic programming equation (\ref{RBSDE})  when the generator $f$ is convex in $(y,z) \in \mathbb{R}^{1+D}$.
These constructions are based, respectively, on the choice of suitable martingales and control processes. In Section 3.3 we present, for the special case where $f$ does not depend on $z$,
error estimates which quantify how these choices affect the quality of the resulting error bounds.

\subsection{Upper bounds}
 We first consider a pathwise approach
which leads to supersolutions of the dynamic program due to the convexity of $f$. Roughly speaking, the idea is to remove all conditional expectations from equation (\ref{RBSDE}) and subtract martingale increments,
wherever conditional expectations were removed. To this end, let us fix a one-dimensional martingale $M^0$ and an $\mathbb{R}^D$-valued martingale $M$ such that
\[
\sum_{d=1}^D \sum_{i=0}^{n-1} E[\alpha^{(d)}_i |M_{d,i+1}-M_{d,i}|] \Delta_i<\infty.
\]
The set of all such pairs $(M^0,M)$ is denoted by $\mathcal{M}_{1+D}$.
Given $(M^0,M)\in \mathcal{M}_{1+D}$ we define the non-adapted process
$\theta^{up}_i=\theta^{up}_i(M^0,M)$ via
 \begin{equation}\label{pathwise}
\theta^{up}_i=\max\{S_i,\theta^{up}_{i+1}-(M^0_{i+1}-M^0_i) +f(i,\theta^{up}_i, \beta_{i+1} \theta^{up}_{i+1}-(M_{i+1}-M_i) )\Delta_i\},\quad \theta^{up}_n=S_n.
\end{equation}
Once the martingales are chosen, this recursion can be solved path by path. The stochastic Lipschitz condition on $f$ and the assumptions on $M$ ensure that 
$\theta^{up}_i$ is integrable. Hence, after 
solving the recursion,  we can take conditional expectation once, instead of taking \textit{nested} conditional expectations as in the original dynamic program (\ref{RBSDE}). Exploiting the convexity of $f$ 
we shall now show that $E_i[\theta^{up}_i]$ is always an upper bound for $Y_i$, and that $Y_i$ can be recovered by a suitable choice of the martingales. We recall that the 
martingale part $N$ of the Doob decomposition of an integrable stochastic process $V$ (Doob martingale of $V$, for short) is given by
$$
N_i:=\sum_{j=0}^{i-1} (V_{j+1}-E_j[V_{j+1}]),\quad i=0,\ldots,n.
$$
\begin{thm}\label{thm:convexdual}
 Suppose $f$ is convex in $(y,z)$. Then, for every $i=0,\ldots,n$,
\begin{eqnarray*}
 Y_i=\essinf_{(M^0,M)\in \mathcal{M}_{1+D}} E_i[\theta^{up}_i(M^0,M)]
\end{eqnarray*}
where $\theta^{up}(M^0,M)$ is defined by the pathwise dynamic programming equation (\ref{pathwise}). Moreover, 
the martingale $( M^{0,*}, M^*)$, where $ M^{0,*}$ and $ M^*$ are the Doob martingales of $Y$ and $\beta Y$, is optimal even in the sense of pathwise control, i.e.
$$
\theta_i^{up}(  M^{0,*}, M^*) = Y_i,\quad P\textnormal{-a.s.}
$$  
\end{thm}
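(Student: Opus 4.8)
The plan is to establish the two claims separately. First I would verify the \emph{optimality of the Doob martingales}: setting $M^{0,*}$ and $M^*$ equal to the Doob martingales of $Y$ and $\beta Y$, I want to show that the pathwise recursion (\ref{pathwise}) reproduces $Y$ itself. The point is that with this choice, the martingale increments exactly cancel the ``compensators'': $Y_{i+1} - (M^{0,*}_{i+1} - M^{0,*}_i) = E_i[Y_{i+1}]$ and $\beta_{i+1} Y_{i+1} - (M^*_{i+1} - M^*_i) = E_i[\beta_{i+1} Y_{i+1}]$. So if I \emph{postulate} $\theta^{up}_{i+1}(M^{0,*},M^*) = Y_{i+1}$, then the right-hand side of (\ref{pathwise}) becomes exactly the right-hand side of (\ref{RBSDE}), which equals $Y_i$; but one has to be slightly careful because (\ref{pathwise}) is an \emph{implicit} equation in $\theta^{up}_i$. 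I would therefore argue by backward induction: at time $n$ both equal $S_n$; assuming $\theta^{up}_{i+1}(M^{0,*},M^*) = Y_i$, one checks that $Y_i$ solves the fixed-point equation $x = \max\{S_i, E_i[Y_{i+1}] + f(i,x,E_i[\beta_{i+1}Y_{i+1}])\Delta_i\}$, and uniqueness of this fixed point (which follows from $\alpha^{(0)}_i \Delta_i < 1$ via a contraction argument, exactly as in the well-posedness remark before Example~\ref{exmp:intro}) forces $\theta^{up}_i(M^{0,*},M^*) = Y_i$. Taking conditional expectations then trivially gives $E_i[\theta^{up}_i(M^{0,*},M^*)] = Y_i$, so the infimum over $\mathcal{M}_{1+D}$ is at most $Y_i$. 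I should also check that $(M^{0,*},M^*) \in \mathcal{M}_{1+D}$, i.e. the integrability condition $\sum_d\sum_i E[\alpha^{(d)}_i|M^*_{d,i+1}-M^*_{d,i}|]\Delta_i < \infty$; this uses boundedness of $\beta$, integrability of $Y$, and the summability assumptions on $\alpha^{(d)}$ and $f(i,0,0)$.

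The reverse inequality — that $E_i[\theta^{up}_i(M^0,M)] \geq Y_i$ for \emph{every} $(M^0,M) \in \mathcal{M}_{1+D}$ — is the substantive part and will be the main obstacle. Here I expect to combine convexity of $f$ with Proposition~\ref{prop:comparison}. The idea: show that $U_i := E_i[\theta^{up}_i(M^0,M)]$ is a supersolution of (\ref{RBSDE}), i.e. $U_i \geq \max\{S_i, E_i[U_{i+1}] + f(i,U_i,E_i[\beta_{i+1}U_{i+1}])\Delta_i\}$; once this is in hand, Proposition~\ref{prop:comparison} (applied with $\sigma = i$, $\tau = n$, $Y^{up} = U$, $Y^{low} = Y$, using $U_n = S_n = Y_n$) yields $U_i \geq Y_i$. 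To get the supersolution property I would take conditional expectation $E_i[\cdot]$ in (\ref{pathwise}). The $\max$ and $E_i$ interchange in one direction: $E_i[\max\{S_i,\cdot\}] \geq \max\{S_i, E_i[\cdot]\}$ since $S_i$ is $\mathcal{F}_i$-measurable. Inside, the martingale increments vanish under $E_i[\cdot]$; the delicate point is the term $E_i[f(i,\theta^{up}_i, \beta_{i+1}\theta^{up}_{i+1} - (M_{i+1}-M_i))\Delta_i]$. By Jensen's inequality, convexity of $f$ in $(y,z)$ gives $E_i[f(i,\theta^{up}_i, \beta_{i+1}\theta^{up}_{i+1} - (M_{i+1}-M_i))] \geq f(i, E_i[\theta^{up}_i], E_i[\beta_{i+1}\theta^{up}_{i+1} - (M_{i+1}-M_i)]) = f(i, U_i, E_i[\beta_{i+1}\theta^{up}_{i+1}])$, and since $M$ is a martingale the last expression simplifies; combining, one arrives at $U_i \geq \max\{S_i, E_i[U_{i+1}] + f(i,U_i, E_i[\beta_{i+1}U_{i+1}])\Delta_i\}$ provided one also replaces $E_i[\beta_{i+1}\theta^{up}_{i+1}]$ with $E_i[\beta_{i+1}U_{i+1}]$, which holds by the tower property since $U_{i+1} = E_{i+1}[\theta^{up}_{i+1}]$ and $\beta_{i+1}$ is $\mathcal{F}_{i+1}$-measurable.

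The technical care needed is twofold. First, $\theta^{up}$ is not adapted, so I must keep track of which quantities are conditioned on $\mathcal{F}_i$ versus $\mathcal{F}_{i+1}$ and be sure that Jensen's inequality is applied to the correct conditioning $\sigma$-field; the generator $f(i,\cdot,\cdot)$ is $\mathcal{F}_i$-measurable in its randomness, which is exactly what makes the conditional Jensen step legitimate. Second, I must justify all the integrability needed for the conditional expectations and Jensen's inequality to make sense — that $\theta^{up}_i$ is integrable (stated in the text, following from the stochastic Lipschitz bound and the $\mathcal{M}_{1+D}$ condition by backward induction) and that $f(i,\theta^{up}_i,\beta_{i+1}\theta^{up}_{i+1}-(M_{i+1}-M_i))$ is integrable (again via the Lipschitz estimate $|f(i,y,z)| \leq |f(i,0,0)| + \alpha^{(0)}_i|y| + \sum_d \alpha^{(d)}_i|z_d|$ and the assumptions). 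With the supersolution property and Proposition~\ref{prop:comparison} in hand, the $\essinf$ representation follows, and the Doob-martingale computation from the first paragraph shows the $\essinf$ is attained pathwise, completing the proof.
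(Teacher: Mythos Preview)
Your proposal is correct and follows essentially the same approach as the paper: Jensen's inequality (via convexity of $f$ and of $\max$) together with the martingale property shows that $E_i[\theta^{up}_i(M^0,M)]$ is a supersolution of (\ref{RBSDE}), whence $E_i[\theta^{up}_i]\geq Y_i$ by Proposition~\ref{prop:comparison}; and a backward induction using the contraction/uniqueness argument for the implicit step shows that the Doob martingales of $Y$ and $\beta Y$ yield $\theta^{up}_i=Y_i$ pathwise. (Minor slip: in your induction hypothesis you wrote $\theta^{up}_{i+1}(M^{0,*},M^*)=Y_i$ where you mean $Y_{i+1}$; and the check that $(M^{0,*},M^*)\in\mathcal{M}_{1+D}$ hinges specifically on condition (\ref{Lip2}), not on separate summability of the $\alpha^{(d)}$.)
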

\begin{proof}
By the convexity of $f$
and of the max-operator as well as the martingale property,
 we obtain,
$$
E_i[\theta^{up}_i] \geq \max\{S_i, E_i[E_{i+1}[\theta^{up}_{i+1}]]+f(i,E_i[\theta^{up}_i], E_i[\beta_{i+1} E_{i+1}[\theta^{up}_{i+1}]]) \Delta_i\}
$$
Consequently, $E_i[\theta^{up}_i(M^0,M)]$ is a supersolution of (\ref{RBSDE}) and by the comparison result of Proposition \ref{prop:comparison}
$$
E_i[\theta^{up}_i(M^0,M)]\geq Y_i.
$$
We now choose ${M}^{0,*}$ and $ M^*$ as the Doob martingales of $Y$ and $\beta Y$, respectively, and note that $(M^{0,*},M^*)\in \mathcal{M}_{1+D}$, because, thanks to (\ref{Lip2}),
\begin{eqnarray*}
 \sum_{d=1}^D \sum_{i=0}^{n-1} E[\alpha^{(d)}_i |M_{d,i+1}-M_{d,i}|] \Delta_i&=&\sum_{d=1}^D \sum_{i=0}^{n-1} E[\alpha^{(d)}_i |\beta_{d,i+1}Y_{i+1}-E_i[\beta_{d,i+1}Y_{i+1}]|] \Delta_i\\
&\leq& 2 \sum_{i=0}^{n-1} E[|Y_{i+1}|], 
\end{eqnarray*}
which is finite. We 
claim that
$$
 \theta^{up, *}_i:=\theta^{up}_i( M^{0,*},  M^*) = Y_i
$$ 
almost surely. This will be shown by backward induction on $i$, with the case $i=n$ being trivial. 
Suppose that the claim is true for $i+1$. Then, making use of the definition of the Doob martingale,
\begin{eqnarray*}
 \theta^{up, *}_i&=&\max\{S_i,Y_{i+1}-(Y_{i+1}-E_i[Y_{i+1}]) +f(i,\theta^{up, *}_i, \beta_{i+1} Y_{i+1}-(\beta_{i+1} Y_{i+1}-E_i[\beta_{i+1} Y_{i+1}]) )\Delta_i\} \\
&=& \max\{S_i, E_i[Y_{i+1}] +f(i,\theta^{up, *}_i, E_i[\beta_{i+1} Y_{i+1}]) \Delta_i\}.
\end{eqnarray*}
By the Lipschitz property of $f$ in the $y$-variable, a straightforward contraction mapping argument shows that $\theta^{up, *}_i=Y_i$, which finishes the proof.
\end{proof}

The previous theorem can be applied to compute upper confidence bounds on $Y_0$. To this end one first chooses a $(1+D)$-dimensional martingale, which one thinks is close to the Doob martingale 
of $(Y,\beta Y)$. This can e.g. be (related to) the Doob martingale of an approximation $\tilde Y$ of $Y$ which was pre-computed by an algorithm of one's choice. Then one solves the pathwise 
dynamic program in (\ref{pathwise}) and finally approximates the expectation by averaging over sample paths. The details of such an implementation are discussed in Section \ref{sec:algorithm} below. One issue, which arises
in this approach, is that the pathwise dynamic program is not explicit in time, as $\theta^{up}_i$ appears on both sides of the equation. It can be solved by a Picard iteration  to a given 
precision. In some situations, the following explicit expression in terms of a pathwise maximization problem is advantageous.
\begin{prop}\label{prop:max}
 Suppose $f$ is convex in $(y,z)$, and define the convex conjugate in the $y$-variable by
$$
f^{\#y}(\omega,i,r,z)=\sup\{ry -f(\omega,i,y,z);\;\quad y \in \mathbb{R}\}
$$
which is defined on
$$
D^{(i,\omega,z)}_{f^{\#y}}:=\{ r \in \mathbb{R};\; f^{\#y}(\omega,i,r,z)<\infty\}\subset [-\alpha^{(0)}_i(\omega),\alpha^{(0)}_i(\omega)].
$$ 
Then, for $(M^0,M) \in \mathcal{M}_{1+D}$ and $i=0,\ldots,n-1$, $\theta_i^{up}$ as defined in (\ref{pathwise}) can be rewritten as
\begin{eqnarray}\label{eq:explicit}
 \theta^{up}_i=\max\left \{S_i, \sup_{r \in D^{(i,\omega,z^{up}_i(\omega))}_{f^{\#y}}} \frac{1}{1-r\Delta_i}\left(\theta^{up}_{i+1}-(M^0_{i+1}-M^0_i) - f^{\#y}(i,r, z^{up}_i )\Delta_i\right)\right\},
\end{eqnarray}
where $z^{up}_i=\beta_{i+1} \theta^{up}_{i+1}-(M_{i+1}-M_i)$.
\end{prop}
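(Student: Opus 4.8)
The plan is to fix $\omega$ and $i\in\{0,\dots,n-1\}$ and to solve the pathwise recursion (\ref{pathwise}) for $\theta^{up}_i$ by hand, exploiting that, once the martingales are fixed, $z^{up}_i=\beta_{i+1}\theta^{up}_{i+1}-(M_{i+1}-M_i)$ does \emph{not} involve $\theta^{up}_i$. Writing $a=\theta^{up}_{i+1}-(M^0_{i+1}-M^0_i)$ and $g(y)=f(\omega,i,y,z^{up}_i(\omega))$, equation (\ref{pathwise}) reads $\theta^{up}_i=\max\{S_i,\,a+\Delta_i g(\theta^{up}_i)\}$, and $f^{\#y}(\omega,i,r,z^{up}_i(\omega))=g^{\#}(r):=\sup_{y\in\mathbb{R}}(ry-g(y))$. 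Here $g$ is real-valued, convex, and Lipschitz with constant $\alpha^{(0)}_i(\omega)$, and by the remark in the text $\theta^{up}_{i+1}$ (hence $a$) is finite $P$-a.s.

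First I would observe that $\psi(y):=y-\Delta_i g(y)$ is strictly increasing: for $y'>y$ the Lipschitz bound gives $g(y')-g(y)\le\alpha^{(0)}_i(y'-y)$, whence $\psi(y')-\psi(y)\ge(1-\alpha^{(0)}_i\Delta_i)(y'-y)>0$ by (\ref{Lip1}). So $\psi$ is a bijection of $\mathbb{R}$ and there is a unique $y^*$ with $\psi(y^*)=a$, i.e.\ $y^*=a+\Delta_i g(y^*)$. Next I would check that the fixed point of $y\mapsto\max\{S_i,\,a+\Delta_i g(y)\}$ (which exists and is unique, since this map is a contraction with modulus $\alpha^{(0)}_i\Delta_i<1$) equals $\max\{S_i,y^*\}$: if the fixed point exceeds $S_i$ it solves $\psi(y)=a$, hence equals $y^*$, which is then $\ge S_i$; if it equals $S_i$, then $S_i\ge a+\Delta_i g(S_i)$, i.e.\ $\psi(S_i)\ge a=\psi(y^*)$, so monotonicity of $\psi$ gives $S_i\ge y^*$, and again $\max\{S_i,y^*\}$ is the fixed point. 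Thus $\theta^{up}_i=\max\{S_i,y^*\}$, and it remains to identify $y^*$ with $\sup_{r\in D}\frac{1}{1-r\Delta_i}\bigl(a-\Delta_i g^{\#}(r)\bigr)$, where $D:=D^{(i,\omega,z^{up}_i(\omega))}_{f^{\#y}}$; note that $1-r\Delta_i>0$ for $|r|\le\alpha^{(0)}_i$ because of (\ref{Lip1}).

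For the inclusion $D\subseteq[-\alpha^{(0)}_i,\alpha^{(0)}_i]$ (and hence well-definedness of the fraction), if $|r|>\alpha^{(0)}_i$ then $ry-g(y)\ge ry-g(0)-\alpha^{(0)}_i|y|\to+\infty$ as $y\to+\infty$ or $y\to-\infty$ according to the sign of $r$, so $g^{\#}(r)=+\infty$. For the identity I would argue by two inequalities. For ``$\ge$'' (with $y^*$ on the left): Fenchel--Young gives $g(y^*)\ge ry^*-g^{\#}(r)$ for every $r\in D$, and inserting this into $a=y^*-\Delta_i g(y^*)$ yields $a\le(1-r\Delta_i)y^*+\Delta_i g^{\#}(r)$; dividing by $1-r\Delta_i>0$ and taking the supremum over $r\in D$ gives $y^*\ge\sup_{r\in D}\frac{1}{1-r\Delta_i}(a-\Delta_i g^{\#}(r))$. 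For ``$\le$'': since $g$ is finite and convex on $\mathbb{R}$, $\partial g(y^*)\ne\emptyset$; picking $r^*\in\partial g(y^*)$, Lipschitz continuity forces $|r^*|\le\alpha^{(0)}_i$, and the Fenchel equality $g(y^*)+g^{\#}(r^*)=r^*y^*$ shows $r^*\in D$; substituting this equality together with $\Delta_i g(y^*)=y^*-a$ gives $\frac{1}{1-r^*\Delta_i}(a-\Delta_i g^{\#}(r^*))=\frac{1}{1-r^*\Delta_i}(y^*-\Delta_i r^*y^*)=y^*$, so the supremum is attained and equals $y^*$. Combining with $\theta^{up}_i=\max\{S_i,y^*\}$ and unwinding the definitions of $a$, $g$, and $g^{\#}$ yields (\ref{eq:explicit}).

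The bulk of this is routine contraction/monotonicity bookkeeping; the only ingredient beyond elementary estimates is the nonemptiness of the subdifferential of a finite convex function on the line (equivalently, attainment in the biconjugation $g=g^{\#\#}$), which is where I expect the modest main obstacle to sit: one must make sure the supremum over the $z$-dependent domain $D$ is genuinely attained, so that the rewritten recursion is an honest identity and not merely an inequality. Measurability is not an issue here, as the whole argument is carried out pathwise.
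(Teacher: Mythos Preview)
Your proof is correct and follows essentially the same route as the paper: both arguments reduce to the biconjugation $g=g^{\#\#}$ for the one-variable convex slice $g(y)=f(i,y,z^{up}_i)$, and both need attainment of the supremum in $r$ to turn the inequality into an identity. The only cosmetic difference is that the paper establishes attainment via lower semicontinuity of $f^{\#y}$ in $r$ together with boundedness of $D^{(i,\omega,z)}_{f^{\#y}}$, whereas you use the (equivalent) fact that a finite convex function on $\mathbb{R}$ has a nonempty subdifferential, which is arguably the more direct route here.
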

\begin{proof}
 By convexity, we have $f(i,\cdot)=(f(i,\cdot)^{\#y})^{\#r}$, where $^{\#r}$ denotes the convex conjugate in the $r$-variable of $f^{\#y}(i,\cdot)$. Hence,
$$
\theta^{up}_i=\max\left\{S_i, \sup_{r \in D^{(i,\omega,z^{up}_i(\omega))}_{f^{\#y}}} \left(\theta^{up}_{i+1}-(M^0_{i+1}-M^0_i) +r\theta^{up}_i\Delta_i- f^{\#y}(i,r, z^{up}_i )\Delta_i\right)\right\}
$$
By a similar argument as on p. 36 in \citet{EKPQ} the supremum is achieved at some $r^*\in D^{(i,\omega,z^{up}_i(\omega))}_{f^{\#y}}$. 
Indeed, notice first that the extension of $f^{\#y}$ to an $\mathbb{R}\cup\{+\infty\}$-valued function on the real line via $f^{\#y}(\omega,i,r,z)=+\infty$ 
for $r\notin D^{(i,\omega,z)}_{f^{\#y}}$ is lower-semicontinuous
in $r$ by Theorem 12.2 and p. 52 in \citet{Rock}. By the boundedness of the set $D^{(i,\omega,z^{up}_i(\omega))}_{f^{\#y}}$, there is a sequence $r_k:=r_k(\omega)$  converging to a limit $r^*=r^*(\omega)$ 
in the closure of $D^{(i,\omega,z^{up}_i(\omega))}_{f^{\#y}}$ such that 
\begin{eqnarray*}
&& \sup_{r \in D^{(i,\omega,z^{up}_i(\omega))}_{f^{\#y}}} \left(\theta^{up}_{i+1}-(M^0_{i+1}-M^0_i) +r\theta^{up}_i\Delta_i- f^{\#y}(i,r, z^{up}_i )\Delta_i\right)\\
&=& \lim_{k\rightarrow \infty}  \left(\theta^{up}_{i+1}-(M^0_{i+1}-M^0_i) +r_k\theta^{up}_i\Delta_i- f^{\#y}(i,r_k, z^{up}_i )\Delta_i\right) \\
&\leq & \left(\theta^{up}_{i+1}-(M^0_{i+1}-M^0_i) +r^*\theta^{up}_i\Delta_i- f^{\#y}(i,r^*, z^{up}_i )\Delta_i\right),
\end{eqnarray*}
where the  inequality is due to the lower-semicontinuity. This implies 
$$f^{\#y}(i,r^*, z^{up}_i )<\infty.$$ 
Thus, $r^*(\omega)\in D^{(i,\omega,z^{up}_i(\omega))}_{f^{\#y}}$
and it  attains the supremum. Hence,
$$
\theta^{up}_i=\max\{S_i,  \theta_{i+1}^{up}-(M^0_{i+1}-M^0_i) + r^*\theta^{up}_i\Delta_i- f^{\#y}(i,  r^*, z^{up}_i )\Delta_i\}.
$$
For $\theta^{up}_i>S_i$ we, thus obtain 
$$
\theta^{up}_i= \frac{1}{1-  r^* \Delta_i}\left(\theta^{up}_{i+1}-(M^0_{i+1}-M^0_i) - f^{\#y}(i,r^*, z^{up}_i )\Delta_i \right). 
$$
Consequently, $\theta^{up}_i$ is dominated by the right hand side of the assertion. The reverse inequality can be shown in the same way.
\end{proof}

\begin{exmp}\label{exmp:thetafunding}
 In Example \ref{exmp:intro}  (i), 
$$
f^{\#y}(i,r,z)=z^\top \sigma_i^{-1}(\mu_i+r\bar{1})
$$
and the maximizer must belong to  the set $\{-R_i^b, -R_i^l\}$, because $f(i,\cdot)=(f(i,\cdot)^{\#y})^{\#r}$. Hence, for the European option case, a recursion for $\theta^{up}$, which is explicit in time, reads
$$
\theta^{up}_i= \sup_{r \in \{-R_i^b, -R_i^l\}} \frac{1}{1-r\Delta_i}\Bigl(\theta^{up}_{i+1}-(M^0_{i+1}-M^0_i) -  [\beta_{i+1} \theta^{up}_{i+1}-(M_{i+1}-M_i) )]^\top\sigma_i^{-1}(\mu_i+r\bar{1} ) \Delta_i\Bigl).
$$
\end{exmp}

\subsection{Lower bounds}
We now turn to the construction of subsolutions.
In order to derive a maximization problem with value process given by $Y_i$, we
denote by $f^\#$ the convex conjugate of $f$ in $(y,z)$, i.e.
$$
f^\#(\omega,i,r,\rho)=\sup\{ry+ \rho^\top z -f(\omega,i,y,z);\;\quad (y,z) \in \mathbb{R}^{1+D}\}
$$
which is defined on
$$
D^{(i,\omega)}_{f^\#}:=\{(r,\rho)\in \mathbb{R}^{1+D};\; f^\#(\omega,i,r,\rho)<\infty\}\subset \prod_{d=0}^D [-\alpha^{(d)}_i(\omega),\alpha^{(d)}_i(\omega)].
$$
We also define
$$
\mathcal{U}_i(f^\#):=\{(r_j,\rho_j)_{j\geq i} \textnormal{ adapted};\; \sum_{j=i}^{n-1} E[|f^\#(j,r_j,\rho_j)|]\Delta_j<\infty\},
$$
and note that $(f^\#(j,r_j,\rho_j))_{j\geq i}$ is an adapted process for $(r,\rho)\in \mathcal{U}_i(f^\#)$. The next lemma in particular shows that $\mathcal{U}_i(f^\#)$ is nonempty.
\begin{lem}\label{lem:subgradient}
 Suppose $f$ is convex in $(y,z)$, $(\tilde Y)_{j\geq i}$ is an $\mathbb{R}$-valued adapted and integrable process, and $(\tilde Z_j)_{j\geq i}$ is an $\mathbb{R}^D$-valued adapted process such that
$$
\sum_{d=1}^D \sum_{j=i}^{n-1}  E[\alpha_j^{(d)} |\tilde Z_{d,j}|]\Delta_j<\infty.
$$
Then there is a pair $(\tilde r,\tilde \rho)\in \mathcal{U}_i(f^\#)$ such that for $j=i,\ldots,n-1$,
$$
\tilde r_j\tilde Y_j+{\tilde \rho_j}^\top \tilde Z_j-f^\#(j,\tilde r_j,\tilde \rho_j)=f(j,\tilde Y_j, \tilde Z_j).
$$
\end{lem}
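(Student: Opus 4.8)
The plan is to read the asserted identity as the Fenchel--Young equality, so that the lemma reduces to a measurable selection of a subgradient. Fix $j\in\{i,\dots,n-1\}$. By the stochastic Lipschitz condition and finiteness of $f(j,0,0)$, the map $f(j,\cdot)(\omega):\mathbb{R}^{1+D}\to\mathbb{R}$ is, for a.e.\ $\omega$, a finite -- hence continuous -- convex function, so its subdifferential at the point $(\tilde Y_j(\omega),\tilde Z_j(\omega))$ is nonempty (see \citet{Rock}). Moreover the separate Lipschitz bounds in $y$ and in each $z_d$ force any subgradient $(r,\rho)$ to satisfy $|r|\le\alpha^{(0)}_j(\omega)$ and $|\rho_d|\le\alpha^{(d)}_j(\omega)$, so this subdifferential is a nonempty, convex, \emph{compact} subset of $\prod_{d=0}^D[-\alpha^{(d)}_j(\omega),\alpha^{(d)}_j(\omega)]$. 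For any selection $(\tilde r_j(\omega),\tilde\rho_j(\omega))$ out of it, the Fenchel--Young equality gives precisely
$$
\tilde r_j\tilde Y_j+\tilde\rho_j^\top\tilde Z_j-f^\#(j,\tilde r_j,\tilde\rho_j)=f(j,\tilde Y_j,\tilde Z_j),
$$
with the left-hand side finite. Thus \emph{pointwise} existence is automatic, and the real content is to choose $\omega\mapsto(\tilde r_j(\omega),\tilde\rho_j(\omega))$ $\mathcal{F}_j$-measurably.

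The measurable-selection step is where the work lies. Since $f(\cdot,y,z)$ is adapted and $f(j,\omega,\cdot)$ is continuous, $(\omega,y,z)\mapsto f(j,\omega,y,z)$ is a Carath\'eodory map, hence $\mathcal{F}_j\otimes\mathcal{B}(\mathbb{R}^{1+D})$-measurable; composing with the $\mathcal{F}_j$-measurable point $(\tilde Y_j,\tilde Z_j)$, the set
$$
\Gamma_j=\Bigl\{(\omega,r,\rho):\ f(j,\omega,\tilde Y_j(\omega)+v,\tilde Z_j(\omega)+w)\ge f(j,\omega,\tilde Y_j(\omega),\tilde Z_j(\omega))+rv+\rho^\top w\ \ \forall\,(v,w)\in\mathbb{Q}^{1+D}\Bigr\}
$$
is $\mathcal{F}_j\otimes\mathcal{B}(\mathbb{R}^{1+D})$-measurable, being a countable intersection of measurable sets, and by continuity of $f(j,\cdot)$ its $\omega$-section equals the subdifferential described above, which is nonempty and closed. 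The Kuratowski--Ryll-Nardzewski measurable selection theorem (equivalently, the von Neumann--Aumann theorem for set-valued maps with measurable graph and closed nonempty values) then yields an $\mathcal{F}_j$-measurable selection $\omega\mapsto(\tilde r_j(\omega),\tilde\rho_j(\omega))\in\Gamma_j$; doing this for each $j$ and collecting the results produces the adapted process $(\tilde r,\tilde\rho)$. If one prefers to avoid abstract selection theorems, the same $(\tilde r_j,\tilde\rho_j)$ can be exhibited explicitly as the minimal-norm subgradient, obtained as the $m\to\infty$ limit of the (Carath\'eodory, hence measurable) gradients in $(y,z)$ of the Moreau--Yosida regularizations $(y,z)\mapsto\inf_{(u,\zeta)}\{f(j,\omega,u,\zeta)+\tfrac{m}{2}(|u-y|^2+|\zeta-z|^2)\}$, evaluated at $(\tilde Y_j(\omega),\tilde Z_j(\omega))$.

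It remains to verify $(\tilde r,\tilde\rho)\in\mathcal{U}_i(f^\#)$, which is routine. From the identity of the first paragraph, the subgradient bounds $|\tilde r_j|\le\alpha^{(0)}_j$ and $|\tilde\rho_{d,j}|\le\alpha^{(d)}_j$, and the Lipschitz estimate $|f(j,\tilde Y_j,\tilde Z_j)|\le|f(j,0,0)|+\alpha^{(0)}_j|\tilde Y_j|+\sum_{d=1}^D\alpha^{(d)}_j|\tilde Z_{d,j}|$, one gets
$$
|f^\#(j,\tilde r_j,\tilde\rho_j)|\le 2\alpha^{(0)}_j|\tilde Y_j|+2\sum_{d=1}^D\alpha^{(d)}_j|\tilde Z_{d,j}|+|f(j,0,0)|.
$$
Multiplying by $\Delta_j$, summing over the finitely many $j\in\{i,\dots,n-1\}$ and taking expectations: the $\tilde Y$-terms are bounded by $2\sum_j E[|\tilde Y_j|]<\infty$ because $\alpha^{(0)}_j\Delta_j<1$ by (\ref{Lip1}); the $\tilde Z$-terms are exactly of the form assumed finite in the hypothesis; and $\sum_j E[|f(j,0,0)|]\Delta_j<\infty$. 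Hence $\sum_{j=i}^{n-1}E[|f^\#(j,\tilde r_j,\tilde\rho_j)|]\Delta_j<\infty$, i.e.\ $(\tilde r,\tilde\rho)\in\mathcal{U}_i(f^\#)$, and the displayed identity holds for every $j$. I expect the measurable-selection step of the second paragraph to be the only genuine obstacle; everything else is the convex-duality dictionary plus integrability bookkeeping.
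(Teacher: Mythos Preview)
Your proof is correct and follows essentially the same route as the paper: measurable selection of a subgradient to realize the Fenchel--Young equality, followed by the same integrability estimate for membership in $\mathcal{U}_i(f^\#)$. The only difference is cosmetic: where you construct the measurable selection by hand (via the Carath\'eodory property and Kuratowski--Ryll-Nardzewski, or alternatively Moreau--Yosida), the paper simply invokes Theorem~7.10 of \citet{CKV} for the existence of an $\mathcal{F}_j$-measurable subgradient.
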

\begin{proof}
Similarly to the proof of Proposition 6.1 in \citet{CS}, we exploit the existence of a measurable subgradient due to Theorem 7.10 in \citet{CKV}. The latter theorem guarantees
for every $j=i,\ldots, n-1$ existence 
of an $\mathcal{F}_j$-measurable random vector $(\tilde r_j,\tilde \rho_j)$ such that
$$
f(j, \tilde Y_j+y,\tilde Z_j+z)- f(j,\tilde Y_j,\tilde Z_j)\geq \tilde r_j y+ \tilde \rho_j^\top z
$$ 
for every $(y,z)\in \mathbb{R}^{1+D}$. Taking the supremum over $(y,z)\in \mathbb{R}^{1+D}$, one has
$$
\tilde r_j\tilde Y_j+{\tilde \rho_j}^\top \tilde Z_j-f^\#(j,\tilde r_j,\tilde \rho_j)\geq f(j,\tilde Y_j, \tilde Z_j).
$$
In particular, $(\tilde r_j(\omega),\tilde \rho_j(\omega))\in D^{(j,\omega)}_{f^\#}$. The converse inequality immediately follows from $f^{\#\#}=f$ by convexity. So it remains to show that
$$
\sum_{j=i}^{n-1} E[|f^\#(j,\tilde r_j,\tilde \rho_j)|]\Delta_j<\infty.
$$
By the stochastic Lipschitz property of $f$ and the boundedness of $D^{(j,\omega)}_{f^\#}$ we obtain
\begin{eqnarray*}
&& \sum_{j=i}^{n-1} E[|f^\#(j,\tilde r_j,\tilde \rho_j)|]\Delta_j=  \sum_{j=i}^{n-1} E[|\tilde r_j\tilde Y_j+{\tilde \rho_j}^\top \tilde Z_j-f(j,\tilde Y_j, \tilde Z_j)|]\Delta_j \\
&\leq & 2 \sum_{j=i}^{n-1} E[\alpha_j^{(0)} \Delta_j |\tilde Y_j|] + 2 \sum_{d=1}^D \sum_{j=i}^{n-1} E[\alpha_j^{(d)} |\tilde Z_{d,j}|]\Delta_j +  \sum_{j=i}^{n-1} E[|f(j,0,0)|]\Delta_j<\infty
\end{eqnarray*}
thanks to (\ref{Lip2}).
\end{proof}

The following result is a discrete time reflected 
analogue of Proposition 3.4 in \citet{EKPQ}. For discrete time (non-reflected) BSDEs a similar result (for convex generators in $z$ only) can be found in \citet{CS} under a different set of assumptions.
\begin{thm}\label{thm:primal}
Suppose $f$ is convex in $(y,z)$. Let 
$$
\theta_i^{low}(\tau,r,\rho) :=\Gamma_{i,\tau}(r,\rho)S_\tau -
\sum_{j=i}^{\tau-1} \Gamma_{i,j}(r,\rho) \frac{f^\#(j,r_j,\rho_j)\Delta_j}{1-r_j \Delta_j} \;\text{ where }\; \Gamma_{i,j}(r,\rho):= \prod_{k=i}^{j-1} \frac{1+\rho_k^\top  \beta_{k+1} \Delta_k}{1-r_k \Delta_k}.
$$
Then,
\begin{eqnarray*}
 Y_i=\esssup_{\tau \in \bar{\mathcal{S}}_i}  \esssup_{(r,\rho) \in \mathcal{U}_i(f^\#)} E_i[\theta_i^{low}(\tau,r,\rho)]
\end{eqnarray*}
Maximizers exist and are given by any $(r_j^*,\rho_j^*)_{j\geq i}$ such that for $j=i,\ldots,n-1$,
\begin{equation}\label{convexopt}
r^*_jY_j+{\rho^*_j}^\top E_j[\beta_{j+1}Y_{j+1}]-f^\#(j,r^*_j,\rho^*_j)=f(j,Y_j, E_j[\beta_{j+1}Y_{j+1}])
\end{equation}
and $\tau_i^*$ as defined in (\ref{tau*}).
\end{thm}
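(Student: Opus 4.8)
The plan is to deduce both halves of the identity from the optimal-stopping representation in Proposition~\ref{prop:stopping}, after linearising the convex generator by Fenchel--Young duality. Two elementary observations drive the argument. First, for convex $f$ and any $(r_j,\rho_j)$ one has the Fenchel--Young inequality
\[
f(j,y,z)\ \ge\ r_j y+\rho_j^\top z-f^\#(j,r_j,\rho_j)\qquad\text{for every }(y,z)\in\mathbb R^{1+D},
\]
with equality exactly when $(r_j,\rho_j)$ is a subgradient of $f(j,\cdot)$ at $(y,z)$. Second, if $(r,\rho)\in\mathcal U_i(f^\#)$ then $(r_j,\rho_j)\in D^{(j,\omega)}_{f^\#}$ a.s., so $|r_j|\le\alpha^{(0)}_j$ and $|\rho_j^\top\beta_{j+1}|\le\sum_{d}\alpha^{(d)}_j|\beta_{d,j+1}|$; hence by (\ref{Lip1}) the scalars $1-r_j\Delta_j\ge 1-\alpha^{(0)}_j\Delta_j>0$ and $1+\rho_j^\top\beta_{j+1}\Delta_j\ge 0$, the weights $\Gamma_{i,j}(r,\rho)$ are nonnegative, and they obey the one-step recursion $\Gamma_{i,j+1}(r,\rho)=\frac{1+\rho_j^\top\beta_{j+1}\Delta_j}{1-r_j\Delta_j}\,\Gamma_{i,j}(r,\rho)$ with $\Gamma_{i,i}(r,\rho)=1$.

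I would first prove that $E_i[\theta_i^{low}(\tau,r,\rho)]\le Y_i$ for every $\tau\in\bar{\mathcal S}_i$ and $(r,\rho)\in\mathcal U_i(f^\#)$. Fix such a pair and work with the process $Y^{(\tau)}$ of Proposition~\ref{prop:stopping}, for which $Y_i\ge Y^{(\tau)}_i$. For $i\le j<\tau$, inserting $(y,z)=(Y^{(\tau)}_j,E_j[\beta_{j+1}Y^{(\tau)}_{j+1}])$ into the Fenchel--Young inequality and into the recursion defining $Y^{(\tau)}$, then dividing by $1-r_j\Delta_j$, multiplying by the $\mathcal F_j$-measurable weight $\Gamma_{i,j}(r,\rho)$, and using the one-step identity for $\Gamma$ gives
\[
\Gamma_{i,j}(r,\rho)\,Y^{(\tau)}_j\ \ge\ E_j\!\left[\Gamma_{i,j+1}(r,\rho)\,Y^{(\tau)}_{j+1}\right]-\Gamma_{i,j}(r,\rho)\,\frac{f^\#(j,r_j,\rho_j)\Delta_j}{1-r_j\Delta_j}.
\]
Consequently $P_j:=\Gamma_{i,j}(r,\rho)\,Y^{(\tau)}_j-\sum_{k=i}^{j-1}\Gamma_{i,k}(r,\rho)\,\frac{f^\#(k,r_k,\rho_k)\Delta_k}{1-r_k\Delta_k}$ is a supermartingale up to time $\tau$ with $P_i=Y^{(\tau)}_i$ and $P_\tau=\theta_i^{low}(\tau,r,\rho)$, and optional sampling yields $Y_i\ge Y^{(\tau)}_i=P_i\ge E_i[\theta_i^{low}(\tau,r,\rho)]$.

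For the reverse inequality and the optimality statement, I would apply Lemma~\ref{lem:subgradient} with $\tilde Y_j=Y_j$ and $\tilde Z_j=E_j[\beta_{j+1}Y_{j+1}]$ --- its summability hypothesis following from (\ref{Lip1}) and the integrability of $Y$ --- to obtain $(r^*,\rho^*)\in\mathcal U_i(f^\#)$ making Fenchel--Young an equality at $(Y_j,E_j[\beta_{j+1}Y_{j+1}])$, that is, satisfying (\ref{convexopt}). Taking $\tau=\tau_i^*$ from (\ref{tau*}) and using that $Y_j=E_j[Y_{j+1}]+f(j,Y_j,E_j[\beta_{j+1}Y_{j+1}])\Delta_j$ for $i\le j<\tau_i^*$ while $Y_{\tau_i^*}=S_{\tau_i^*}$, the computation of the previous step can be repeated with $Y$ in place of $Y^{(\tau)}$ and with every inequality turned into an equality; the resulting process $P^*$ is then a martingale, so $Y_i=P^*_i=E_i[P^*_{\tau_i^*}]=E_i[\theta_i^{low}(\tau_i^*,r^*,\rho^*)]$. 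Combined with the previous step, this proves $Y_i=\esssup_{\tau\in\bar{\mathcal S}_i}\esssup_{(r,\rho)\in\mathcal U_i(f^\#)}E_i[\theta_i^{low}(\tau,r,\rho)]$ and exhibits $(\tau_i^*,r^*,\rho^*)$ as a maximizer.

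I expect the main obstacle to be the optional-sampling step: one must check that $P$ --- equivalently $\theta_i^{low}(\tau,r,\rho)$ together with its summands --- is genuinely integrable, so that $P_i\ge E_i[P_\tau]$ holds without a localisation caveat. For this one would combine $E[|S_\tau|]<\infty$ and finiteness of $S$ along $\tau$ (from $\tau\in\bar{\mathcal S}_i$), the bound $\sum_{j}E[|f^\#(j,r_j,\rho_j)|]\Delta_j<\infty$ (from $(r,\rho)\in\mathcal U_i(f^\#)$), and a control of the random products $\Gamma_{i,j}(r,\rho)$ via $1-r_j\Delta_j\ge 1-\alpha^{(0)}_j\Delta_j$ together with the boundedness of $\beta$ and of $D^{(j,\omega)}_{f^\#}$. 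This bookkeeping, rather than the structural argument outlined above, is where I expect the real work to lie.
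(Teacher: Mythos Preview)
Your proof is correct and shares its key ingredients with the paper's proof: the Fenchel--Young inequality to linearise the convex generator, Proposition~\ref{prop:stopping} to reduce to the stopped processes $Y^{(\tau)}$, and Lemma~\ref{lem:subgradient} applied to $(Y_j,E_j[\beta_{j+1}Y_{j+1}])$ to manufacture the optimal control $(r^*,\rho^*)$ satisfying~(\ref{convexopt}). The route, however, is organised differently. For the inequality $E_i[\theta_i^{low}(\tau,r,\rho)]\le Y_i$, the paper defines $Y_j(\tau,r,\rho):=E_j[\theta_j^{low}(\tau,r,\rho)]$, derives the linear recursion
\[
Y_j(\tau,r,\rho)=E_j[Y_{j+1}(\tau,r,\rho)]+\bigl(r_jY_j(\tau,r,\rho)+\rho_j^\top E_j[\beta_{j+1}Y_{j+1}(\tau,r,\rho)]-f^\#(j,r_j,\rho_j)\bigr)\Delta_j,
\]
applies Fenchel--Young to recognise this as a subsolution of the non-reflected scheme with generator $f$, and invokes the comparison result of Proposition~\ref{prop:comparison} together with Proposition~\ref{prop:stopping} to conclude. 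You instead build the discounted process $P_j=\Gamma_{i,j}(r,\rho)Y^{(\tau)}_j-\sum_{k=i}^{j-1}\Gamma_{i,k}(r,\rho)\frac{f^\#(k,r_k,\rho_k)\Delta_k}{1-r_k\Delta_k}$, verify directly that it is an $(\mathcal F_j)$-supermartingale up to $\tau$, and apply optional sampling. Your approach is a touch more elementary in that it bypasses the comparison principle for this half of the argument; the paper's approach has the advantage of exhibiting $E_j[\theta_j^{low}]$ as a \emph{subsolution} of the discrete BSDE, which keeps the presentation parallel to the upper-bound Theorem~\ref{thm:convexdual} and is of some independent interest. For the reverse inequality both arguments coincide: with $(r^*,\rho^*)$ and $\tau_i^*$ all inequalities become equalities, and the paper simply phrases the conclusion as $Y_i=Y_i(\tau_i^*,r^*,\rho^*)$ by uniqueness of the linear recursion rather than as a martingale identity. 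Your concern about integrability is legitimate but not serious in this discrete-time, bounded-horizon setting: since $\tau\le n$ and the factors $\Gamma_{i,j}(r,\rho)$ are bounded (via $|r_j|\le\alpha^{(0)}_j$, $|\rho_j^\top\beta_{j+1}|\le\sum_d\alpha^{(d)}_j|\beta_{d,j+1}|$, and~(\ref{Lip1})), the integrability of $P_j$ follows from that of $Y^{(\tau)}_j$ and the defining condition on $\mathcal U_i(f^\#)$; the paper does not spell this out either.
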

\begin{proof}
 Fix $i\in\{0,\ldots, n\}$. Given a stopping time $\tau\in \bar{\mathcal{S}}_i$ and a pair $(r,\rho) \in \mathcal{U}_i(f^\#)$ define
$$
Y_j(\tau, r,\rho):= E_j\left[\theta_j^{low}(\tau,r,\rho) \right], \quad i\leq j \leq \tau.
$$
Then, $Y_\tau(\tau, r,\rho)=S_\tau$ and, for $i\leq j < \tau$, 
\begin{eqnarray*}
 Y_j(\tau, r,\rho)&= & E_j[Y_{j+1}(\tau, r,\rho)]+(\rho_j^\top E_j[\beta_{j+1}Y_{j+1}(\tau, r,\rho) ]+ r_j Y_j(\tau, r,\rho)- f^\#(j,r_j,\rho_j))\Delta_j 
\\ &\leq&  E_j[Y_{j+1}(\tau, r,\rho)] + f(j,Y_j(\tau, r,\rho), E_j[\beta_{j+1}Y_{j+1}(\tau, r,\rho) ])\Delta_j,
\end{eqnarray*}
where the last estimate is due to the fact that $f^{\#\#}=f$ by convexity. Now the comparison result in Proposition \ref{prop:comparison} and Proposition \ref{prop:stopping} 
imply
$$
Y_i(\tau, r,\rho) \leq Y^{(\tau)}_i\leq Y_i.
$$ 
For the converse inequality, we first notice that the Lemma \ref{lem:subgradient} yields existence of  a pair of processes 
$(r_j^*,\rho_j^*)_{j\geq i} \in \mathcal{U}_i(f^\#)$ such that (\ref{convexopt}) holds, because by (\ref{Lip2})
\begin{eqnarray*}
 \sum_{d=1}^D \sum_{j=i}^{n-1}  E[\alpha_j^{(d)} |E_j[\beta_{d,j+1}Y_{j+1}]|]\Delta_j \leq  \sum_{j=i}^{n-1} E\left[ |Y_{j+1}|\right]<\infty.
\end{eqnarray*}
 Then, by the definition of $\tau^*_i$, we obtain for $i\leq j <\tau^*_i$
\begin{eqnarray*}
 Y_j&=&E_j[Y_{j+1}] +f(j,Y_j, E_j[\beta_{j+1} Y_{j+1}])\Delta_j \\ &=& E_j[Y_{j+1}]+ \left(r^*_j Y_j+{\rho^*_j}^\top E_j[\beta_{j+1}Y_{j+1}]-f^\#(j,r^*,\rho^*_j)\right)\Delta_j.
\end{eqnarray*}
As $Y_{\tau^*_i}=S_{\tau^*_i}$, we conclude that
$$
Y_i=Y_i(\tau^*_i, r^*,\rho^*)
$$ 
by uniqueness for this dynamic programming equation.
\end{proof}
\begin{rem}
 If we think of the representation in Theorem \ref{thm:primal} as a `primal' maximization problem, then the representation in Theorem \ref{thm:convexdual} can be interpreted
as a dual minimization problem in the sense of information relaxation. This dual approach was introduced for Bermudan option pricing by \citet{Ro} and \citet{HK}, and was further developed 
for discrete time stochastic control problems by \citet{BSS}. Indeed, given a martingale $(M^0,M)\in \mathcal{M}_{1+D}$, we define 
\begin{eqnarray*}
\frak p_{M^0,M}: \{i,\ldots,n\} \times  \prod_{j=i}^{n-1} D^{(j,\omega)}_{f^\#} &\rightarrow & L^1(\Omega,P), \\
(k,(r,\rho))&\mapsto&\sum_{j=i}^{k-1} \Gamma_{i,j}(r,\rho)\frac{(M^0_{j+1}-M^0_j)+ \rho_j^\top  ( M_{j+1}-M_j)\Delta_j}{1-r_j\Delta_j}.
\end{eqnarray*}
Then, for every $(\tau, (r,\rho))\in \bar{\mathcal{S}}_i\times  \mathcal{U}_i(f^\#)$,
\begin{equation}\label{eq:penalty}
E_i[\frak p_{M^0,M}(\tau,r,\rho)]=0.
\end{equation}
We next relax the adaptedness property of the controls $(\tau, (r,\rho))$ and observe that, by Theorem \ref{thm:primal} and (\ref{eq:penalty}),
\begin{eqnarray*}
 Y_i&=& \esssup_{\tau \in \bar{\mathcal{S}_i}}  \esssup_{(r,\rho) \in \mathcal{U}_i(f^\#)} E_i[\theta_i^{low}(\tau,r,\rho) - \frak p_{M^0,M}(\tau,r,\rho)] \\
&\leq &  E_i\Bigg[\max_{k=i,\ldots, n}\; \max_{ \twertt{ \scriptstyle (r_j,\rho_j)\in  D^{(j,\omega)}_{f^\#}}{ \scriptscriptstyle j=i,\ldots,k-1\; }}\; \left(\theta_i^{low}(k,r,\rho) - \frak p_{M^0,M}(k,r,\rho) \right) \Bigg]\\
&=:& E_i[\tilde \theta_i(M^0,M)].
\end{eqnarray*}
Notice that the maximum on the right hand side of the inequality is taken pathwise, which means that we may now choose anticipating controls. The rationale of the information relaxation approach is that 
one allows for anticipating controls, but subtracts a penalty, here $\frak p_{M^0,M}$. The penalty does not penalize non-anticipating controls by (\ref{eq:penalty}). We say that a penalty 
$\frak p^*$ is optimal, if it
penalizes anticipating controls in a way that the pathwise maximum is achieved at a non-anticipating control. This  implies
$$
Y_i=E_i \Bigg[\max_{k=i,\ldots, n}\; \max_{ \twertt{ \scriptstyle (r_j,\rho_j)\in  D^{(j,\omega)}_{f^\#}}{ \scriptscriptstyle j=i,\ldots,k-1\; }}\; \left(\theta_i^{low}(k,r,\rho)      - \frak p^*(k,r,\rho)\right) \Bigg].
 $$
In the present setting, one can show that
$$
\tilde \theta_i(M^0,M)=\theta_i^{up}(M^0,M).
$$
To see this, one first derives a recursion formula for $\tilde \theta_i(M^0,M)$ and then follows the arguments behind Proposition 
\ref{prop:max}. In particular, Theorem \ref{thm:convexdual} shows that an optimal penalty is given by $\frak p_{ M^{0,*}, M^*}$.
\end{rem}

\subsection{Error estimates}

We now provide some error analysis of the lower and upper bounds for the convex case: How does the accuracy of the input approximations affect the tightness of the upper and lower bounds?
For simplicity, we focus here on the case where $\beta \equiv 0$. While obviously restrictive, this case does cover many applications of practical interest, such as  the BSDEs arising
in the credit risk literature, see \citet{crepey2013counterparty, henry2012counterparty}. For the lower bound, we assume that the suboptimal controls are derived from an input approximation
of the process $Y$ in exactly the same way, in which we choose these controls in the algorithm presented in Section 4, cf. \eqref{eq:r} and (\ref{defcontr}).
An in-depth-analysis of the general case would certainly require to pose additional assumption on $\beta$ and is beyond the scope of this paper. 

\begin{thm}
Suppose $f$ is convex in $(y,z)$ and denote by $M^{0,*}$    the Doob martingale of $Y$. \\  
(i) For every $M^0\in \mathcal{M}_1$ and $i=0,\ldots,n-1$,
$$
E_i[ \theta^{up}_i(M^0)]-Y_i\leq  E_i\left[C(i,\alpha^{(0)}) \max_{j=i,\ldots,n} 
|M^0_j-M^{0,*}_j|\right],
$$
where
$$
C(\alpha^{(0)},i)=1+ \prod_{l=i}^{n-1} 
(1-\alpha_l^{(0)}\Delta_l)^{-1}\left(1+\sum_{j=i}^{n} 
\alpha_j^{(0)}\Delta_j\right).
$$
(ii) Let $i=0,\ldots, n-1$. Suppose $(\tilde Q_j)_{j=i,\ldots,n-1}$ is an adapted and integrable approximation of 
$(E_j[Y_{j+1}])_{j=i,\ldots,n-1}$ and $(\tilde Y_j)_{j=i,\ldots,n-1}$ is an adapted and integrable  
approximation of $(Y_j)_{j=i,\ldots,n-1}$. Define
an adapted process $r$ via
\begin{equation}\label{eq:r}
r_j\tilde Y_j-f^\#(j,r_j)=f(j,\tilde Y_j),\quad j=i,\ldots,n-1,
\end{equation}
  and $\tau:=\inf\{j\geq i;\; S_j\geq \tilde Q_j+f(j,\tilde 
Y_j)\Delta_j\}\wedge n$.
Then,
\begin{eqnarray*}
   Y_i-E_i[\theta_i^{low}(\tau,r)]   &\leq &
E_i\left[c(i,\alpha^{(0)},\tau) \left(3\sum_{j=i}^{\tau\wedge (n-1)} |\tilde 
Y_j-Y_j|\alpha^{(0)}_j\Delta_j\right.\right. \\ &&\left.\left.+ \sum_{j=i}^{\tau-1}{\bf 1}_{A_j} (\tilde 
Q_j-E_j[Y_{j+1}])_+ +{\bf 1}_{{A^c_\tau}\cap \{\tau<n\}} 
(E_\tau[Y_{\tau+1}]-\tilde Q_\tau)_+\right) \right]
\end{eqnarray*}
where
\begin{eqnarray*}
  c(i,\alpha^{(0)},\tau)&=& \prod_{l=i}^{\tau \wedge (n-1)} 
(1-\alpha_l^{(0)}\Delta_l)^{-1}, \\
A_j&=&\{S_j\geq E_j[Y_{j+1}]+f(j,Y_j 
)\Delta_j\},\quad j=i,\ldots, n-1.
\end{eqnarray*}
\end{thm}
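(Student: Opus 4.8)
The plan is to prove the two estimates by backward induction in $i$, tracking how the one-step errors in the martingale (part (i)) or in the controls $(\tau,r)$ (part (ii)) propagate through the pathwise dynamic program. I first treat part (i). Set $e_j := \theta^{up}_j(M^0) - Y_j$; by Theorem~\ref{thm:convexdual} applied with $\beta\equiv 0$ we have $\theta^{up}_j(M^{0,*}) = Y_j$ pathwise, so $e_j$ measures exactly the effect of replacing $M^{0,*}$ by $M^0$. From the pathwise recursion \eqref{pathwise}, on the set where the maximum is attained by the second argument, one subtracts the corresponding identity for $Y_j$ and uses the Lipschitz property of $f$ in $y$ together with $1$-Lipschitzness of $\max$ to get, writing $\delta^0_j := (M^0_{j+1}-M^0_j) - (M^{0,*}_{j+1}-M^{0,*}_j)$,
\[
|e_j| \le |e_{j+1}| + |\delta^0_j| + \alpha^{(0)}_j |e_j| \Delta_j
\]
(the $z$-dependence has dropped out since $\beta\equiv 0$), hence $|e_j| \le (1-\alpha^{(0)}_j\Delta_j)^{-1}(|e_{j+1}| + |\delta^0_j|)$, with $e_n = 0$. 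Iterating this, bounding the telescoping martingale differences $|\delta^0_j|$ crudely by $2\max_{k\ge j}|M^0_k - M^{0,*}_k|$, and collecting the product of the $(1-\alpha^{(0)}_l\Delta_l)^{-1}$ factors together with the sum of $\alpha^{(0)}_j\Delta_j$ terms, gives the constant $C(\alpha^{(0)},i)$ after taking $E_i[\,\cdot\,]$. One has to be a little careful that on the set $\{\theta^{up}_j = S_j\}$ one only gets $e_j \le$ (RHS of the supremum case), which is still enough because $Y_j \ge S_j$; this preserves the one-sided inequality we want.

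For part (ii), the idea is that $Y_i - E_i[\theta^{low}_i(\tau,r)] = E_i[Y_i - Y_i(\tau,r)]$ where $Y_j(\tau,r) := E_j[\theta^{low}_j(\tau,r)]$, and by the proof of Theorem~\ref{thm:primal} this difference is nonnegative and is governed by two sources of suboptimality: (a) using $r$ defined via the approximation $\tilde Y_j$ through \eqref{eq:r} instead of the true optimizer $r^*_j$ from \eqref{convexopt}, and (b) using the stopping rule $\tau$ built from $\tilde Q_j, \tilde Y_j$ instead of $\tau^*_i$. I would again run a backward induction on $j$ from $\tau$ down to $i$ for the quantity $g_j := (Y_j - Y_j(\tau,r)){\bf 1}_{\{j\le\tau\}}$. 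At a non-stopping time $j < \tau$ one compares the recursion
\[
Y_j(\tau,r) = E_j[Y_{j+1}(\tau,r)] + (r_j Y_j(\tau,r) - f^\#(j,r_j))\Delta_j
\]
with the true equation for $Y_j$; writing $f(j,Y_j) = (r_j Y_j - f^\#(j,r_j)) + \big(f(j,Y_j) - r_j Y_j + f^\#(j,r_j)\big)$ and noting that by \eqref{eq:r} and the subgradient inequality the bracketed ``defect'' equals $r_j(Y_j - \tilde Y_j) - (f(j,\tilde Y_j) - f(j,Y_j))$, which is bounded in absolute value by $2\alpha^{(0)}_j|\tilde Y_j - Y_j|$ (using $|r_j|\le\alpha^{(0)}_j$ and Lipschitzness). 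This produces $|g_j| \le (1-\alpha^{(0)}_j\Delta_j)^{-1}\big(E_j|g_{j+1}| + 2\alpha^{(0)}_j|\tilde Y_j-Y_j|\Delta_j\big)$ on $\{j<\tau\}$; the third copy of $\alpha^{(0)}_j|\tilde Y_j-Y_j|\Delta_j$ in the statement (the factor $3$) comes from the boundary/stopping terms, see below. At the (random) terminal index $\tau$, $Y_\tau(\tau,r) = S_\tau$ but $Y_\tau$ need not equal $S_\tau$, so $g_\tau = Y_\tau - S_\tau$. Using the optimal-stopping representation of Proposition~\ref{prop:stopping} (or directly the dynamic program), $Y_\tau - S_\tau \ge 0$, and one bounds it by comparing the event $\{\tau<n\}$ — where the approximate rule stopped, i.e. $S_\tau \ge \tilde Q_\tau + f(\tau,\tilde Y_\tau)\Delta_\tau$ — against the true continuation value $E_\tau[Y_{\tau+1}] + f(\tau,Y_\tau)\Delta_\tau$; the gap is controlled by $(E_\tau[Y_{\tau+1}] - \tilde Q_\tau)_+$ plus an $\alpha^{(0)}_\tau|\tilde Y_\tau - Y_\tau|\Delta_\tau$ term, and this contributes precisely the ${\bf 1}_{A^c_\tau\cap\{\tau<n\}}(E_\tau[Y_{\tau+1}]-\tilde Q_\tau)_+$ summand. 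Symmetrically, at each $j<\tau$ where the true rule would have stopped ($j\in A_j$) but the approximate rule did not, one loses $(\tilde Q_j - E_j[Y_{j+1}])_+$ worth of value, contributing the $\sum_{j=i}^{\tau-1}{\bf 1}_{A_j}(\tilde Q_j - E_j[Y_{j+1}])_+$ term; this is the standard ``comparison of stopping rules'' estimate from the Bermudan primal-dual literature, and it is where one must be careful to only pick up one-sided errors. Assembling: iterate the recursion for $g_j$ from $\tau$ down to $i$, pull out the product $c(i,\alpha^{(0)},\tau) = \prod_{l=i}^{\tau\wedge(n-1)}(1-\alpha^{(0)}_l\Delta_l)^{-1}$ as a uniform bound on all the accumulated contraction factors, collect the three types of error terms, and take $E_i[\,\cdot\,]$; by the tower property the nested conditional expectations collapse and one obtains the stated bound.

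The main obstacle I anticipate is the bookkeeping around the random stopping index $\tau$ in part (ii): one needs the backward induction to be run on the stochastic interval $\{i,\dots,\tau\}$, which means carefully handling the indicator ${\bf 1}_{\{j\le\tau\}}$ inside conditional expectations (it is $\mathcal{F}_j$-measurable since $\tau$ is a stopping time, so it can be pulled in and out of $E_j[\cdot]$ correctly), and making sure the ``lost value'' at stopping-decision mismatches is attributed to the right term with the correct sign — in particular that the contraction factor $c(i,\alpha^{(0)},\tau)$, not a larger deterministic constant, multiplies each error. The extraction of the clean multiplicative constants $C$ and $c$ from products of $(1-\alpha^{(0)}_l\Delta_l)^{-1}$ against sums of $\alpha^{(0)}_j\Delta_j$ is routine once the recursive inequalities are in place, but one should double-check that $1-\alpha^{(0)}_l\Delta_l>0$ (guaranteed by \eqref{Lip2}) so all these factors are positive and the inequalities are not reversed.
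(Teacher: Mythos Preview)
Your approach to part (ii) is essentially the same as the paper's: define $Y^{low}_j=E_j[\theta^{low}_j(\tau,r)]$, derive its one-step recursion via the relation between $r$ and $\tilde Y$, compare it with the true dynamic program for $Y_j$, pick up the reflection error $(S_j-E_j[Y_{j+1}]-f(j,Y_j)\Delta_j)_+$ at each $j<\tau$ (which on $A_j$ is controlled by $(\tilde Q_j-E_j[Y_{j+1}])_+$ plus one more copy of $\alpha^{(0)}_j|\tilde Y_j-Y_j|\Delta_j$, giving the factor $3$), iterate, and bound the terminal gap $Y_\tau-S_\tau$ on $A_\tau^c\cap\{\tau<n\}$ exactly as you describe.

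For part (i), however, your direct recursion $|e_j|\le(1-\alpha^{(0)}_j\Delta_j)^{-1}(|e_{j+1}|+|\delta^0_j|)$ is correct but does \emph{not} yield the stated constant $C(\alpha^{(0)},i)$. Iterating it and bounding $|\delta^0_j|\le 2\max_k|M^0_k-M^{0,*}_k|$ gives a prefactor of order $\sum_{j=i}^{n-1}\prod_{l=i}^{j}(1-\alpha_l^{(0)}\Delta_l)^{-1}$, which grows like $n-i$ as the mesh refines, whereas $C(\alpha^{(0)},i)$ involves the much smaller quantity $\sum_j\alpha^{(0)}_j\Delta_j$. Your sentence ``collecting the product \ldots\ together with the sum of $\alpha^{(0)}_j\Delta_j$ terms'' is the hand-wave: there is no such sum in your recursion, only a sum of $|\delta^0_j|$'s.

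To obtain the sharper constant, the paper does two things you are missing. First, it decomposes $\theta^{up}_i(M^0)=\max_{i\le k\le n}\theta^{up,k}_i(M^0)$ where each $\theta^{up,k}$ solves a recursion \emph{without} the running maximum (terminal value $S_k$ at $k$); this gets the $\max\{S_j,\cdot\}$ out of the way. Second, and crucially, it tracks the \emph{shifted} difference $\Delta\theta_i=\max_k|\theta^{up,k}_i(M^0)-\theta^{up,k}_i(M^{0,*})-M^0_i+M^{0,*}_i|$. The shift absorbs the martingale increment so that the one-step recursion for $\Delta\theta_i$ has source term $\alpha^{(0)}_i\Delta_i|M^0_i-M^{0,*}_i|$ rather than $|\delta^0_i|$; this is exactly what produces $\sum_j\alpha^{(0)}_j\Delta_j$ upon iteration. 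The extra ``$1+$'' in $C(\alpha^{(0)},i)$ then comes from undoing the shift at the end. Without these two devices, you get \emph{an} estimate, but not the one in the theorem.
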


\begin{rem}
  In the nonreflected case, i.e. $S_i=-\infty$ for $i<n$, we have 
$\tau=n$ and $A_j=\emptyset$ for $j<n$. Hence the lower bound estimate 
simplifies to
$$
 Y_i - E_i[\theta_i^{low}(\tau,r)] \leq 3\,E_i\left[c(i,\alpha^{(0)},n) 
\sum_{j=i}^{n-1} |\tilde Y_j-Y_j|\alpha^{(0)}_j\Delta_j \right].
$$
In the reflected case, the indicators ${\bf 1}_{A_j}$ and ${\bf 1}_{{A^c_\tau}\cap \{\tau<n\}}$ correspond to wrong stopping decisions 
of the approximate stopping time $\tau$ compared to the optimal stopping time. In practice, such wrong stopping decisions rarely occur, when a good 
approximation $\hat Q_j$ of the continuation value  $E_j[Y_{j+1}]$ is applied. Hence, the corresponding terms are not expected to grow linearly in the number of exercise dates, 
although this is suggested by the worst case estimates. For a rigorous statement 
of this intuition in the case of optimal stopping we refer \citet{Be}. 
\end{rem}
 
\begin{proof}
(i) Given a martingale $M^0$ define for $k=0,\ldots, n$ and 
$i=0,\ldots,k-1$
$$
\theta_i^{up,k}(M^0)=\theta_{i+1}^{up,k}(M^0)+f\left(i,\max_{i\leq \kappa 
\leq n} \theta^{up,\kappa}_i(M^0)\right)\Delta_i-(M^0_{i+1}-M^0_i),\quad  
\theta_k^{up,k}(M^0)=S_k,
$$
with the convention that $\theta^{up,k}_i=0$ for $i>k$. Backward 
induction combined with a contraction mapping argument shows that there 
exist a unique solution $\theta_i^{up,k}(M^0)$ such that $$E\left[\max_{i\leq 
k\leq n}|\theta_i^{up,k}(M^0)|\right]<\infty.$$
We claim that $\theta^{up}_i(M^0)$ defined via (\ref{pathwise}) 
coincides with  $\max_{i\leq k \leq n} \theta^{up,k}_i(M^0)$. Indeed,
\begin{eqnarray*}
  \max_{i\leq k \leq n} \theta^{up,k}_i(M^0)&=&\max\left\{S_i, \max_{i+1\leq 
k\leq n}  \theta_{i+1}^{up,k}(M^0)+ f\left(i,\max_{i\leq k \leq n} 
\theta^{up,k}_i(M^0)\right)\Delta_i-(M^0_{i+1}-M^0_i)\right\},\\
\max_{n\leq k \leq n} \theta^{up,k}_n(M^0)&=&S_n.
\end{eqnarray*}
  As this equation has a unique solution, we conclude that
$$
\max_{i\leq k \leq n} \theta^{up,k}_i(M^0)=\theta^{up}_i(M^0).
$$
In particular, thanks to Theorem \ref{thm:convexdual},
\begin{eqnarray}\label{eq:error1}
E_i[ \theta^{up}_i(M^0)]&\leq&E_i\left[\max_{i\leq k \leq n} 
\theta^{up,k}_i(M^{0,*})\right]+ E_i\left[\max_{i\leq k \leq n} 
|\theta^{up,k}_i(M^0)-\theta^{up,k}_i(M^{0,*})|\right]\nonumber\\
&=& Y_i+ E_i\left[\max_{i\leq k \leq n} 
|\theta^{up,k}_i(M^0)-\theta^{up,k}_i(M^{0,*})|\right].
\end{eqnarray}
So it remains to estimate the last term on the right-hand side of 
(\ref{eq:error1}). We denote
$$
\Delta\theta_i=\max_{i\leq k\leq n} 
|\theta^{up,k}_i(M^0)-\theta^{up,k}_i(M^{0,*})-M^{0}_i+M^{0,*}_i|.
$$
Then,
\begin{eqnarray*}
  \Delta\theta_i&\leq &\Delta\theta_{i+1}+\left|f\left(i,\max_{i\leq k \leq n} 
\theta^{up,k}_i(M^0)-M^0_i+M^{0,*}_i\right)\Delta_i-f\left(i,\max_{i\leq k \leq n} 
\theta^{up,k}_i(M^{0,*})\right)\Delta_i\right| \\ &&+
  \left|f\left(i,\max_{i\leq k \leq n} 
\theta^{up,k}_i(M^0)-M^0_i+M^{0,*}_i\right)\Delta_i+f\left(i,\max_{i\leq k \leq n} 
\theta^{up,k}_i(M^{0})\right)\Delta_i\right|.
\end{eqnarray*}
Hence,
$$
\Delta\theta_i\leq  (1-\alpha_i^{(0)}\Delta_i)^{-1}\left( 
\Delta\theta_{i+1}+\alpha_i^{(0)}\Delta_i |M^0_i-M^{0,*}_i|\right),
$$
which in turn implies
$$
\Delta\theta_i\leq \prod_{l=i}^{n-1} 
(1-\alpha_l^{(0)}\Delta_l)^{-1}\left(\Delta\theta_n+\sum_{j=i}^{n} 
\alpha_j^{(0)}\Delta_j |M^0_j-M^{0,*}_j|\right).
$$
Thus,
$$
\max_{i\leq k \leq n} 
|\theta^{up,k}_i(M^0)-\theta^{up,k}_i(M^{0,*})|\leq C(\alpha^{(0)},i) 
\max_{j=i,\ldots,n} |M^0_j-M^{0,*}_j|.
$$
Combining this estimate with (\ref{eq:error1}) finishes the error 
estimate for the upper bound.
\\[0.1cm]
(ii) We now turn to the estimate for the lower bound.
Note first that, by Lemma \ref{lem:subgradient}, there is an adapted 
process $r$ such that (\ref{eq:r}) holds. As (\ref{eq:r}) implies that 
$r_j(\omega)\in D^{(j,\omega)}_{f^\#}$,
  we observe that $|r_j|\leq \alpha_j^{(0)}$.
Define $Y^{low}_j=E_j[\theta^{low}_j(\tau,r)]$ for $i\leq j \leq \tau$. 
Then, as in the proof of Theorem \ref{thm:primal} and making use of the 
relation between $r$ and $\tilde Y$, we obtain
$$
Y^{low}_j=E_j[Y^{low}_{j+1}]+ r_j(Y^{low}_j-\tilde 
Y_j)\Delta_j+f(j,\tilde Y_j)\Delta_j,\quad i\leq j<\tau,\quad 
Y^{low}_\tau=S_\tau.
$$
We now recall that
$$
Y_j=E_j[Y_{j+1}]+ f(j,Y_j)\Delta_j + 
(S_j-E_j[Y_{j+1}]-f(j,Y_j)\Delta_j)_+.
$$
Hence, for $i\leq j<\tau$
\begin{eqnarray*}
  E_i[Y_j-Y^{low}_j]&\leq& E_i[Y_{j+1}-Y^{low}_{j+1}]+2E_i[\alpha^{(0)}_j 
|\tilde Y_j-Y_j|]\Delta_j +E_i[\alpha^{(0)}_j|Y_j-Y^{low}_j|]\Delta_j\\ &&+ 
E_i[(S_j-E_j[Y_{j+1}]-f(j,Y_j)\Delta_j)_+]
\end{eqnarray*}
As $S_j<\tilde Q_j+f(j,\tilde Y_j)\Delta_j$ for $j<\tau$, we obtain
\begin{eqnarray*}
  && Y_i-Y^{low}_i\nonumber \\ &\leq& E_i\left[\prod_{l=i}^{\tau-1} 
(1-\alpha_l^{(0)}\Delta_l)^{-1}\left(Y_\tau-Y^{low}_\tau+ 3 
\sum_{j=i}^{\tau-1} \alpha^{(0)}_j |\tilde Y_j-Y_j|\Delta_j + 
\sum_{j=i}^{\tau-1}{\bf 1}_{A_j} (\tilde 
Q_j-E_j[Y_{j+1}])_+\right)\right].
\end{eqnarray*}
To finish the proof it now suffices to observe that by the definition of $\tau$
\begin{eqnarray*}
  Y_\tau-Y^{low}_\tau&=&{\bf 1}_{{A^c_\tau}\cap \{\tau<n\}}\left( 
E_\tau[Y_{\tau+1}]+f(\tau,Y_\tau)\Delta_\tau-S_\tau \right)\\
&\leq & {\bf 1}_{{A^c_\tau}\cap \{\tau<n\}} 
\left((E_\tau[Y_{\tau+1}]-\tilde Q_\tau)_+ + \alpha^{(0)}_\tau 
|Y_\tau-\tilde Y_\tau| \Delta_\tau\right).
\end{eqnarray*}
\end{proof}

\section{A primal-dual algorithm for the convex case}

\subsection{The algorithm}\label{sec:algorithm}
In this section we explain, how the results of Section \ref{SecConvex} can be applied in order to construct an upper biased estimator, a lower biased estimator, and confidence intervals for $Y_0$ 
in the spirit of the \citet{AB} algorithm for Bermudan option pricing, when $f$ is convex in $(y,z)$. 

\paragraph*{Markovian setting and input approximations.}
We suppose that 
we are in a Markovian setting, i.e. $f(i,\cdot)=F(i,X_i,\cdot)$ and $S_i=G_i(X_i)$ depend on $\omega$ only through an $\mathbb{R}^N$-valued Markovian process $X_i$ where the mappings $F$ and $G$ are measurable 
in the $x$-component and such that the resulting $f$ and $S$ fulfill the conditions postulated in Section 2.
Moreover, $\beta_{i+1}$ is assumed to be independent of $\mathcal{F}_i$. Then, there are deterministic functions $y_i(x),q_i(x),z_{d,i}(x), d=1,\ldots,D$, such that
$$
Y_i=y_i(X_i),\quad E_i[Y_{i+1}]=q_i(X_i),\quad E_i[\beta_{d,i+1} Y_{i+1}]=z_{d,i}(X_i)
$$
and
\begin{eqnarray}\label{eq:intyqz1}
 E\left[ \sum_{i=1}^{n-1} \left(|y_i(X_i)|+|q_i(X_i)| + \sum_{d=1}^D (\alpha^{(d)}_i+ 1) |z_{d,i}(X_i)|\right)\right]<\infty.
\end{eqnarray}
We assume that measurable approximations $\tilde y_i(x), \tilde q_i(x)$ and $\tilde z_{d,i}(x)$ for these functions are pre-computed by some numerical algorithm, such that 
the integrability condition (\ref{eq:intyqz1}) also holds for the tilded expressions. This ensures that the samples in the numerical algorithm below are always drawn from integrable random variables. 
In our numerical experiments a least-squares Monte Carlo 
estimator for the conditional expectations in (\ref{RBSDE}) is applied in order to construct these approximations, but other choices are possible.

\paragraph*{Upper biased estimator.}
Given the approximations $\tilde y_i(x), \tilde q_i(x)$ and $\tilde z_{d,i}(x)$, we sample $\Lambda^{out}$ independent copies  $$(X_i(\lambda),\beta_i(\lambda);\;i=0,\ldots,n)_{\lambda=1,\ldots, \Lambda^{out}}$$ of 
$(X_i,\beta_i;\;i=0,\ldots,n)$, to which we refer as `outer' paths. For the upper confidence bound we apply Theorem \ref{thm:convexdual}. We thus wish to calculate 
$\theta^{up}_i(M^0,M)$ for some martingales $M^0$, $M$, which are `close' to the unknown Doob martingales of $Y$ and $\beta Y$. We apply instead the Doob martingales of the approximations 
$\tilde y(X)$ and $\beta   \tilde y(X)$ to $Y$ and $\beta Y$. Along the $\lambda$th outer path this leads in view of (\ref{pathwise}) to $ \theta^{up}_n(\lambda)=G_n(X_n(\lambda))$
and, for $i=n-1,\ldots,0$,
\begin{eqnarray}\label{closedform}
&& \theta^{up}_i(\lambda)\nonumber \\&=&\max\Bigl\{G_i(X_i(\lambda)),\theta^{up}_{i+1}(\lambda)-(\tilde y_{i+1}(X_{i+1}(\lambda))-E[\tilde y_{i+1}(X_{i+1})|X_i=X_i(\lambda)])\nonumber \\ &&+  F\Bigl(i,X_i(\lambda),\theta^{up}_i(\lambda), \beta_{i+1}(\lambda) \theta^{up}_{i+1}(\lambda)
\nonumber \\ &&\quad-(\beta_{i+1}(\lambda) \tilde y_{i+1}(X_{i+1}(\lambda))-E[\beta_{i+1} \tilde y_{i+1}(X_{i+1})|X_i=X_i(\lambda)] )\Bigr)\Delta_i\Bigr\}.
\end{eqnarray}
Then, by Theorem \ref{thm:convexdual}, the estimator 
$$
\hat Y^{up}:=\frac{1}{\Lambda^{out}} \sum_{\lambda=1}^{\Lambda^{out}}  \theta^{up}_0(\lambda) 
$$
for $Y_0$, which is obtained by averaging over the outer paths,
has a positive bias.
In general, we cannot expect that the conditional expectations in (\ref{closedform}) can be calculated in closed form. 
 Instead we apply a conditionally unbiased estimator for these conditional expectations 
 by averaging over a set of `inner' samples. For each $i$ and each outer path $X(\lambda)$ generate $\Lambda^{in}$ independent copies of $(X_{i+1},\beta_{i+1})$ under the conditional 
law given that $X_i=X_i(\lambda)$. These samples are denoted by $(X_{i+1}(\lambda,l),\beta_{i+1}(\lambda,l)),\; l=1,\ldots, \Lambda^{in}$. We then define the plain Monte Carlo estimators for the 
conditional expectations in (\ref{closedform}) along the $\lambda$th outer paths by
\begin{eqnarray}
\hat E[\tilde y_{i+1}(X_{i+1})|X_i=X_i(\lambda)]&=&\frac{1}{\Lambda^{in}} \sum_{l=1}^{\Lambda^{in}} \tilde y_{i+1}(X_{i+1}(\lambda,l))  \nonumber \\ \hat E[\beta_{i+1} \tilde y_{i+1}( X_{i+1})|X_i=X_i(\lambda)]&=&
\frac{1}{\Lambda^{in}} \sum_{l=1}^{\Lambda^{in}} \beta_{i+1}(\lambda,l) \tilde y_{i+1}( X_{i+1}(\lambda,l)).  \label{inner}
\end{eqnarray}
Then, in the recursive construction for $\theta^{up}_i(\lambda)$ we replace the conditional expectations in (\ref{closedform}) by
the plain Monte Carlo estimators (\ref{inner}) in all instances and apply the notation  $\theta_i^{up,AB}(\lambda)$. The corresponding 
upper bound estimator for $Y_0$ is obtained by averaging over the outer paths
$$
\hat Y^{up,AB}:=\frac{1}{\Lambda^{out}} \sum_{\lambda=1}^{\Lambda^{out}}  \theta_0^{up,AB}(\lambda) .
$$ 
Here, the superscript `$AB$' stands for Andersen and Broadie, who suggested this method for Bermudan options in 2004.
By a straightforward application of Jensen's inequality we observe that, by convexity of the max-operator and of $f$, $\hat Y^{up,AB}$ has an additional positive bias 
compared to $\hat Y^{up}$, which is due to the inner simulations. In particular, $\hat Y^{up,AB}$  has a positive bias as an estimator for $Y_0$. 

 For Bermudan option pricing problems various other 
constructions for the input martingales have been introduced in the literature, see e.g. \citet{BBS}, \citet{DFM} and \citet{SZH}. These constructions can also be adapted to the present BSDE setting.

\paragraph*{Lower biased estimator.}
In order to construct an estimator for $Y_0$ with a negative bias, we define a stopping time
$\tilde \tau(\lambda)$ along the outer paths (i.e. for $\lambda=1,\ldots, \Lambda^{out}$) by 
\begin{eqnarray*}\label{tildetau} \nonumber
\tilde \tau(\lambda)=\inf\{j\geq 0;\; G_j( X_j(\lambda))\geq \tilde q_j( X_j(\lambda))+F(j,X_j(\lambda),\tilde y_j( X_j(\lambda)),\tilde z_j( X_j(\lambda)))\Delta_j  \} \wedge n
\end{eqnarray*}
 and controls $(\tilde r_j(\lambda), \tilde \rho_j(\lambda))_{j=0,\ldots,n-1}\in \mathcal{U}_0(F^\#)$ as (approximate) solutions of 
\begin{eqnarray}\label{defcontr}
 && \tilde r_j(\lambda)\tilde y_j( X_j(\lambda))+\tilde \rho_j(\lambda)^\top \tilde z_j( X_j(\lambda))-F^\#(j,X_{j}(\lambda),\tilde r_j(\lambda),\tilde \rho_j(\lambda)) \nonumber \\ &=&
 F(j,X_{j}(\lambda),\tilde y_j( X_j(\lambda)), \tilde z_j( X_j(\lambda))),
\end{eqnarray}
cf. Lemma \ref{lem:subgradient}.
Then, by Theorem \ref{thm:primal}, the plain Monte Carlo estimator 
\begin{eqnarray*}
\hat Y^{low,AB} &=&\frac{1}{\Lambda^{out}} \sum_{\lambda=1}^{\Lambda^{out}}  \theta_0^{low,AB}(\lambda), \\ 
 \theta_0^{low,AB}(\lambda) &=& 
\Gamma_{0,\tilde \tau(\lambda)}(\tilde r(\lambda),\tilde \rho(\lambda)) G(\tilde \tau(\lambda), X_{\tilde \tau(\lambda)}(\lambda))\\&& +
 \sum_{j=0}^{\tilde \tau(\lambda)-1} \Gamma_{0,j}(\tilde r(\lambda),\tilde \rho(\lambda)) \frac{F^\#(j,X_{j}(\lambda),\tilde r_j(\lambda),\tilde \rho_j(\lambda))\Delta_j}{1-\tilde r_j(\lambda) \Delta_j} 
\end{eqnarray*}
for $Y_0$ has a negative bias.

\paragraph*{Confidence intervals.} 

Starting from the estimator with a positive bias and the one with a negative bias, one can construct asymptotic confidence intervals for $Y_0$ under additional square integrability conditions which ensure 
that
$$
E[|\theta_0^{low,AB}(\lambda)|^2+|\theta_0^{up,AB}(\lambda)|^2]<\infty.
$$
In order to guarantee this, we impose that
\begin{eqnarray*}
 E\left[|G(n,X_n)|^2+ \sum_{i=1}^{n-1} \left(| F(i,X_i,0,0)|^2+|G(i,X_i)|^2 {\bf 1}_{\{G(i,X_i)>-\infty\}}\right)\right]<\infty.
\end{eqnarray*}
This assumption implies that
\begin{eqnarray}\label{eq:intyqz2}
 E\left[ \sum_{i=1}^{n-1} \left(|y_i(X_i)|^2+|q_i(X_i)|^2 + \sum_{d=1}^D (\alpha^{(d)}_i+1)^2 |z_{d,i}(X_i)|^2\right)\right]<\infty
\end{eqnarray}
holds instead of  condition (\ref{eq:intyqz1}). Hence we shall also impose the stronger integrability assumption (\ref{eq:intyqz2}) on the pre-computed approximations
$\tilde y_i(x), \tilde q_i(x)$ and $\tilde z_{d,i}(x)$. This additional assumption  ensures that
$(\tilde r_j(\lambda), \tilde \rho_j(\lambda))_{j=0,\ldots,n-1}$ defined via (\ref{defcontr}) now satisfy
$$
\sum_{j=0}^{n-1} E[|F^\#(j, X_j(\lambda), \tilde r_j(\lambda),\tilde \rho_j(\lambda))|^2]\Delta_j<\infty,
$$
and this square integrability  additionally needs to be assumed, if (\ref{defcontr}) only holds approximately.

Now, with square integrable and independent copies $(\theta_0^{low,AB}(\lambda), \theta_0^{up,AB}(\lambda))$, $\lambda=1,\ldots,\Lambda^{out}$ at hand,
an (asymptotic) 95\% confidence interval $I^{(95)}$ for $Y_0$ can be constructed by adding (resp. subtracting) 1.96 empirical standard deviations 
to the upper estimator (from the lower estimator), i.e.,
\begin{eqnarray}
I^{(95)}= && \left[ 
\hat Y^{low,AB}- 1.96 \left( \frac{1}{\Lambda^{out}(\Lambda^{out}-1)} \sum_{\lambda=1}^{\Lambda^{out}}  (\theta_0^{low,AB}(\lambda) - \hat Y^{low,AB})^2  \right)^\frac12
, \right.\nonumber\\
&& \quad\left.\hat Y^{up,AB}+ 1.96 \left( \frac{1}{\Lambda^{out}(\Lambda^{out}-1)} \sum_{\lambda=1}^{\Lambda^{out}}  (\theta_0^{up,AB}(\lambda) - \hat Y^{up,AB})^2  \right)^\frac12
\right]\nonumber.
\end{eqnarray}
This asymptotic confidence interval is valid even if one applies the same outer paths for the lower estimator which were already used for the upper estimator. Indeed, abbreviating 
$I^{(95)}=[a^{\Lambda^{out}},b^{\Lambda^{out}}]$, one has
\begin{eqnarray*}
 P(\{Y_0\notin I^{(95)}\})&\leq& P(\{Y_0<a^{\Lambda^{out}}\})+ P(\{Y_0>b^{\Lambda^{out}}\})\\ &\leq & P(\{E[\hat Y^{low,AB}]<a^{\Lambda^{out}}\})+ P(\{E[\hat Y^{up,AB}]>b^{\Lambda^{out}}\})
\rightarrow 0.95,
\end{eqnarray*}
as $\Lambda^{out}$ tends to infinity, where we first applied the biasedness of the two estimators and then the central limit theorem separately to both terms.

\paragraph*{Control variates.}
The numerical experiments below (cf. Figure 1) illustrate that the additional bias of the upper bound estimator due to the inner simulations may be substantial with a moderate number of inner paths (say 1,000). It therefore appears to 
be essential to apply variance reduction techniques for the estimation of the conditional expectations in (\ref{closedform}) by Monte Carlo. We suggest some control variates, for which  we merely require 
that 
$$
E[\beta_{d,i+1}],\quad E[\beta_{d,i+1} \,\beta_{d',i+1}],\quad d,d'=1,\ldots,D
$$  
are available in closed form. This is e.g. the case when $\beta_{d,i+1}$ is (up to a constant) given by  truncated increments of independent Brownian motions. In this case we perform an orthogonal 
projection of $\tilde y_{i+1}(X_{i+1})$ on the span of the random variables $(\beta_{1,i+1},\ldots, \beta_{D,i+1})$ under the conditional probability given $X_i$. This orthogonal projection is given by 
\begin{eqnarray*}
 \beta_{i+1}^\top B_{i+1}^{+}E[\beta_{i+1} \tilde y_{i+1}(X_{i+1})|X_i=x],
\end{eqnarray*}
where $B_{i+1}^{+}$ is the Moore-Penrose pseudoinverse of the matrix 
$$
B_{i+1}=(E[\beta_{d,i+1} \,\beta_{d',i+1}])_{d,d'=1,\ldots,D}.
$$
Here, we made use of the assumption that $\beta_{i+1}$ is independent of $\mathcal{F}_i$.
If $\tilde y$ and $\tilde z$ are good approximations of $y$ and $z$, then $\tilde z_i( X_i)$ is also expected to be a good approximation of $E[\beta_{i+1} \tilde y_{i+1}( X_{i+1})|X_i]$. These considerations 
motivate us to replace the estimators (\ref{inner}) for the conditional expectations in (\ref{closedform}) by 
\begin{eqnarray}
&& \hat E^{C}[\tilde y_{i+1}( X_{i+1})|X_i=X_i(\lambda)]  \nonumber\\ 
 &=&E[\beta_{i+1}]^{\top} B_{i+1}^+ \tilde z_i(X_i(\lambda))  
%\\&&
+\frac{1}{\Lambda^{in}} \sum_{l=1}^{\Lambda^{in}} \left(\tilde y_{i+1}( X_{i+1}(\lambda,l))-\beta_{i+1}(\lambda,l)^{\top} B_{i+1}^+ \tilde z_i(X_i(\lambda)) \right) \nonumber 
\end{eqnarray}
and
\begin{eqnarray}
&&\hat E^C[\beta_{i+1} \tilde y_{i+1}( X_{i+1})|X_i=X_i(\lambda)]  = E[\beta_{i+1}]\tilde q_i( X_i(\lambda) )+B_{i+1}B^+_{i+1}\tilde z_i( X_i(\lambda))  \nonumber \\ &&\quad +
\frac{1}{\Lambda^{in}} \sum_{l=1}^{\Lambda^{in}} \beta_{i+1}(\lambda,l) (\tilde y_{i+1}( X_{i+1}(\lambda,l))-\tilde q_i(  X_i(\lambda)) %\nonumber \\ && \quad \quad \quad\quad\quad\quad-
-\beta_{i+1}(\lambda,l)^{\top} B_{i+1}^+ \tilde z_i( X_i(\lambda))) ,\label{controlvariate}
\end{eqnarray}
which are still conditionally unbiased. The estimator $\hat Y^{up,ABC}$ is  then calculated 
analogously to $\hat Y^{up,AB}$, but applying (\ref{controlvariate}) instead of (\ref{inner}). Again, by Jensen's inequality,
the `up'-estimator has a positive bias. For the classical optimal stopping problem, a similar control variate for inner simulations was suggested by \citet{BBS} 
in the special case when $\beta_{i+1}$ are increments of independent Brownian motions.

 We also recommend to run the lower bound estimator $\hat Y^{low,AB}$ with a control variate in order to reduce the number of samples $\Lambda^{out}$. In this regard, we suggest the use of
\begin{eqnarray}\label{CVLB}
&& \hspace{-1cm}\sum_{j=0}^{\tilde \tau(\lambda)-1}   \Gamma_{0,j}(\tilde r(\lambda),\tilde \rho(\lambda)) \frac{\tilde y_{j+1}( X_{j+1}(\lambda))- E[\tilde y_{j+1}(X_{j+1})|X_j=X_j(\lambda)]}{1-\tilde r_j(\lambda)\Delta_j}\nonumber \\
&& \hspace{-0.5cm}+\Gamma_{0,j}(\tilde r(\lambda),\tilde \rho(\lambda)) \frac{\tilde \rho_j(\lambda)^\top \Delta_j (\beta_{j+1}(\lambda)\tilde y_{j+1}( X_{j+1}(\lambda))-E[\beta_{j+1}\tilde y_{j+1}( X_{j+1})|X_j=X_j(\lambda)])}{1-\tilde r_j(\lambda)\Delta_j},\;
\end{eqnarray}
if the conditional expectations are available in closed form. If not, a set of `inner' simulations will be required for the construction of the upper bound estimator anyway, and this inner sample 
can be used to estimate the conditional expectations in the control variate (\ref{CVLB}) via (\ref{controlvariate}). The resulting estimator with a negative bias is denoted 
$\hat Y^{low,ABC}$.  The construction of  asymptotic confidence intervals is, of course, completely analogous to the situation without control variates.

\subsection{Numerical examples}\label{sec:num1}

We apply the above algorithm in the context of adjusting the option price value due to funding constraints in the context of Example \ref{exmp:intro} (i). We consider the pricing problem 
of a European and a Bermudan  call spread option with maturity $T$ on the maximum of $D$ assets, which are modeled by independent, identically distributed geometric Brownian motions with drift $\mu$ and volatility $\sigma$ whose values at time $t_i=Ti/n$, $i=0,\ldots,n$ are denoted by $X_{d,i}$. The interest rates $R^b$ and $R^l$ are constant over time.  
The generator $f$ is then given by
\[
F(i,x,y,z)= -R^l y-\frac{\mu-R^l}{\sigma}\sum_{d=1}^D z_d +(R^b-R^l)\left(y-\frac{1}\sigma\sum_{d=1}^D z_d\right)_{-}
\]
We define $\beta_{d,i+1}(t_{i+1}-t_i)$ as the truncated Brownian increment driving the $d$th stock over the period $[t_i,t_{i+1}]$.
The payoff of the option is given by
$$
G_i(x)= \left\{ \begin{array}{cl}\left(\max_{d=1,\ldots,D} x_d-K_1\right)_+-2\left(\max_{d=1,\ldots,D} x_d-K_2\right)_+, & i \in \mathcal{E} \\ -\infty, & i\notin \mathcal{E}.  \end{array}\right.
$$
for  strikes $K_1, K_2$ and a set of time points  $\mathcal{E}$ at which the option can be exercised. Hence, $\mathcal{E}=\{n\}$ gives a European option.
For the Bermudan option case we consider the situation of four exercise dates which are equidistant over the time horizon, i.e. $\mathcal{E}=\{n/4, n/2, 3n/4, n\}$.
Unless otherwise noted, we use the following parameter values: 
$$D=5, \; T=0.25, \; R^l=0.01, \; R^b=0.06, \; X_{d,0}=100, \; \mu=0.05, \; \sigma=0.2, \;K_1=95, \; K_2=115.$$
We first generate approximations $\tilde y^{LGW}, \tilde q^{LGW}, \tilde z^{LGW}$ by the least-squares Monte Carlo algorithm of \citet{LGW}. This algorithm requires the choice of a set 
of basis functions. Then an empirical regression on the span of these basis functions is performed with a set of $\Lambda^{reg}$ sample paths, which are independent of the outer and inner samples required for the 
primal-dual algorithm later on. In the European option case we apply the following sets of basis functions: For the implementation with $b^y=2$ basis functions we choose $1$ and 
$E[G_n( X_n)|X_i=x]$ for the computation of  $\tilde y_i^{LGW}( x), \tilde q_i^{LGW}( x)$ and  $x_d \frac{d}{dx_d} E[G_n( X_n)|X_i=x]$ for the computation of $\tilde z^{LGW}_{d,i}( x)$, $d=1,\ldots,D$.
For call options on the maximum of $D$ Black-Scholes stocks,  closed form expressions for the option price and its delta in terms of a multivariate normal distribution are derived 
in \citet{Jo}. In the present setting, this formula can be simplified to an expectation of a function of a one-dimensional standard normal random variable, see e.g. \citet{BBS}.
As a trade-off between computational time and accuracy,  we approximate this expectation via quantization of the one-dimensional standard normal distribution with  21 grid points.
In the implementation with $b^y=7$ basis functions we additionally apply $x_1,\ldots, x_5$ as basis functions for $\tilde y_i^{LGW}( x), \tilde q_i^{LGW}( x)$, and $x_d$ as a basis function for 
$\tilde z^{LGW}_{d,i}( x)$. For the Bermudan option case we use six basis functions for  $\tilde y_i^{LGW}( x), \tilde q_i^{LGW}( x)$, namely $1$, 
$E[G_j( X_j)|X_i=x]$, $j\in \mathcal{E}$, and $\max_{j\in \mathcal{E},\;j\geq i} E[G_j( X_j)|X_i=x]$. The corresponding deltas $x_d \frac{d}{dx_d} E[G_j( X_j)|X_i=x]$, $j\in \mathcal{E}, j \geq i$, are 
chosen as basis functions for $\tilde z^{LGW}_{d,i}( x)$.

In the European option case, this choice of basis functions also allows to apply the martingale basis algorithm of \citet{BS2}, although a slight bias in the input approximations
is introduced due to the approximation of the basis functions by the quantization approach.  Compared to the generic least-squares Monte Carlo algorithm the use of martingale basis functions allows to compute 
some conditional expectations in the approximate backward dynamic program explicitly. These closed form computations can be thought of as a perfect control variate 
within the regression algorithm. 

For the computation of the upper confidence bounds we use the explicit recursion for $\theta^{up}$ derived in Example \ref{exmp:thetafunding}. For the computation of the lower confidence 
bound we note that the defining equation (\ref{defcontr}) for the approximate controls $(\tilde r, \tilde \rho)$ for the lower bound can be solved explicitly as
\begin{eqnarray*}
\tilde r_i&=&-R^b{\bf 1}_{\{\tilde y_i(X_i)\leq \sigma^{-1} \sum_{d=1}^D \tilde z_{d,i}(X_i)\}} -R^l{\bf 1}_{\{\tilde y_i(X_i)> \sigma^{-1} \sum_{d=1}^D \tilde z_{d,i}(X_i)\}}  \\
\tilde \rho_{d,i}&=&  -\sigma^{-1}(\tilde r_i+\mu).
\end{eqnarray*}
Figure 1 illustrates the effectiveness of the control variate for the inner samples in the computation of the upper bounds for the European option case with $n=40$ time steps. The input approximation is 
generated by the martingale basis algorithm with seven basis functions and $\Lambda^{reg}=1,000$ sample paths for the empirical regression. The figure depicts the corresponding upper bound estimator 
for the option price $Y_0$ with $\Lambda^{out}=10,000$ sample paths as a function of the number of inner samples $\Lambda^{in}$. From top to bottom, it shows the upper estimators 
$\hat Y^{up,AB}$ (i.e. without inner control variate), $\hat Y^{up,ABC}$ (i.e. with inner control variate), and for comparison the lower bound estimator $\hat Y^{low,AB}$.

\begin{figure}%[ht]
\centerline{\rotatebox[origin=c]{0}{
\includegraphics[scale=0.5]{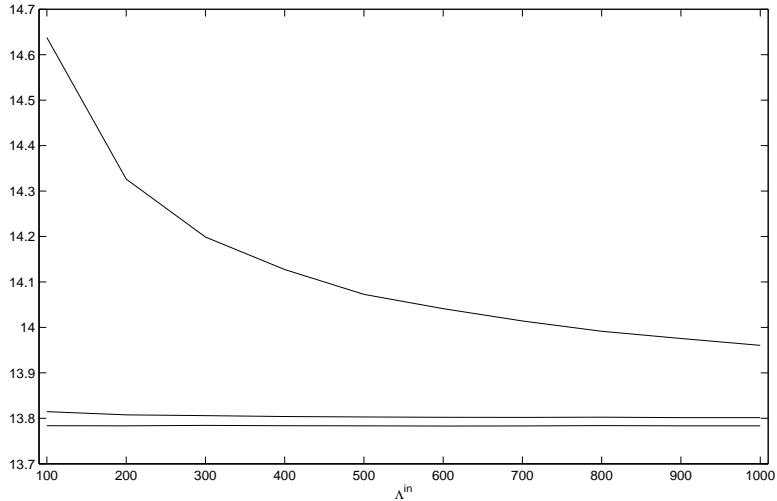}}}
 \caption{Influence of the number of inner simulations and the control variate: upper bound without inner control variate, upper bound with inner control variate, and lower bound (from the top to the bottom).}
\end{figure}

We immediately observe that the predominant part of the upper bias in $\hat Y^{up,AB}$ stems from the subsampling in the approximate construction of the Doob martingales. Without the use of inner control variates,
the relative error between upper and lower estimator is about 6\% for $\Lambda^{in}=100$ inner samples and decreases to about 1.5\% for   $\Lambda^{in}=1,000$ inner samples. Application 
of the inner control variates reduces this relative error to less than 0.25\% even in the case of only $\Lambda^{in}=100$ inner samples.

Table 1 illustrates the influence of different input approximations. It shows realizations of the lower estimator $\hat Y^{low,ABC}$ and the upper estimator $\hat Y^{up,ABC}$ for the option price $Y_0$ as well as the empirical
standard deviations, as the number of time steps increases from $n=40$ to $n=160$.  The column on the left explains which algorithm is run for the input approximation. Here,
LGW stands for the Lemor-Gobet-Warin algorithm and MB for the martingale basis algorithm. It also states the number of regression samples and the number of basis functions $b^y$, which are applied 
in the least-squares Monte Carlo algorithms. The lower and upper price estimates for the Bermudan option case are presented in the last two lines. In this case, the martingale basis algorithm is not available, and the Lemor-Gobet-Warin algorithm 
is run with the six basis functions stated above. We apply $\Lambda^{out}=10,000$ and $\Lambda^{in}=100$ samples in all cases.

\begin{table}
\begin{center}
\begin{tabular}{|c|c|c| c| c|}
\hline
Algorithm  \textbackslash $\;\;n$	& 40 & 80 & 120 & 160\\
\hline
$\twertt{\displaystyle LGW}{\Lambda^{reg}=10^4, b^y=2}$ &  $ \twert{ 13.7786 }{   0.0028  } \; \twert{ 13.8339 }{   0.0031 }$ &  $\twert{13.7597}{    0.0033} \; \twert{13.8858 }{   0.0041 }$ &$\twert{ 13.7583  }{  0.0037 }\; \twert{  13.9482}{  0.0051}$ & $\twert{ 13.7478}{    0.0043}\;  \twert{ 14.0149 }{   0.0062}$\\
\hline
$\twertt{\displaystyle LGW}{\Lambda^{reg}=10^5, b^y=2}$ & $ \twert{13.7783 }{0.0022 } \; \twert{13.8172 }{0.0024 }$ & $ \twert{ 13.7817 }{0.0022 } \; \twert{13.8443 }{0.0027 }$ &  $ \twert{13.7848 }{ 0.0024 } \; \twert{13.8682  }{0.0029 }$ & $ \twert{13.7855 }{0.0025 } \; \twert{13.8967 }{0.0033 }$ \\ 
\hline
$\twertt{\displaystyle MB}{\Lambda^{reg}=10^2, b^y=2}$ & $ \twert{13.7850 }{0.0022 } \; \twert{13.8185 }{0.0023 }$ & $ \twert{13.7898 }{0.0021 } \; \twert{13.8435 }{ 0.0025 }$ &  $ \twert{ 13.7863 }{0.0022 } \; \twert{13.8578 }{0.0025 }$  & $ \twert{13.7904 }{0.0022 } \; \twert{13.8779 }{0.0026 }$ \\
\hline
$\twertt{\displaystyle LGW}{\Lambda^{reg}=10^5, b^y=7}$ &    $ \twert{13.7818 }{0.0020  } \; \twert{13.8140 }{0.0021 }$ & $ \twert{13.7767 }{0.0020 } \; \twert{13.8321 }{0.0022 }$ &  $ \twert{ 13.7789 }{ 0.0022 } \; \twert{13.8560 }{ 0.0025 }$   & $ \twert{13.7764 }{0.0025 } \; \twert{13.8902 }{0.0031 }$  \\
\hline
$\twertt{\displaystyle LGW}{\Lambda^{reg}=10^6, b^y=7}$ &    $ \twert{13.7829 }{0.0017  } \; \twert{13.8079 }{0.0018 }$ & $ \twert{13.7867 }{0.0016 } \; \twert{13.8233 }{0.0018 }$ &  $ \twert{ 13.7884 }{ 0.0017 } \; \twert{13.8393 }{ 0.0020 }$   & $ \twert{13.7867 }{0.0017 } \; \twert{13.8515 }{0.0022 }$  \\
\hline
$\twertt{\displaystyle MB}{\Lambda^{reg}=10^3, b^y=7}$ & $ \twert{ 13.7844 }{0.0017 } \; \twert{13.8077 }{ 0.0017 }$ & $ \twert{13.7897 }{0.0016 } \; \twert{13.8245 }{ 0.0017 }$ &  $ \twert{ 13.7887  }{ 0.0016 } \; \twert{ 13.8353 }{ 0.0019 }$ & $ \twert{13.7880 }{ 0.0017 } \; \twert{13.8485 }{0.0021 }$ \\
\hline
\hline
$\twertt{\displaystyle LGW{\scriptstyle \textnormal{Bermudan}}}{\Lambda^{reg}=10^5}$ &  $\twert{15.5362 }{0.0028 } \; \twert{15.5664 }{0.0028 }$ & $\twert{15.5441 }{0.0037 } \; \twert{15.6160 }{0.0035 }$ & $\twert{15.5246  }{0.0041 } \; \twert{15.6396 }{0.0042 }$ & $\twert{15.5342 }{0.0041 } \; \twert{15.6886 }{0.0048   }$\\
\hline
$\twertt{\displaystyle LGW{\scriptstyle \textnormal{Bermudan}}}{\Lambda^{reg}=10^6}$ &  $\twert{15.5422 }{0.0028 } \; \twert{15.5684 }{0.0026 }$ & $\twert{15.5482 }{0.0032 } \; \twert{15.6050 }{0.0033 }$ & $\twert{15.5441  }{0.0035 } \; \twert{15.6364 }{0.0039 }$ & $\twert{15.5443 }{0.0039 } \; \twert{15.6694 }{0.0042   }$\\
\hline
\end{tabular}
\end{center}
\caption{Upper and lower price bounds for different time discretizations and input approximations in the European and Bermudan case. Standard deviations are in brackets.}
\end{table}

By and large, the table shows that in this 5-dimensional example extremely tight 95\% confidence intervals can be computed by the primal-dual algorithm, although the input approximations are based on very few, but well chosen,
basis functions. For the martingale basis algorithm as input approximation with just two basis functions and 100 regression paths the relative error between lower and upper 95\%-confidence bound 
is about 0.7\% even for $n=160$ steps in the time discretization. It can be further decreased to less than 0.5\%, when seven basis functions and 1,000 regression paths are applied. 
If one takes the input approximation of the Lemor-Gobet-Warin algorithm with the same set of basis functions, then the primal-dual algorithm can in principle produce confidence intervals 
of about the same length as in the case of the martingale basis algorithm. However, in our simulation study the number of regression paths must be increased by a factor of 1,000 in order 
to obtain input approximations which have the same quality as those computed by the martingale basis algorithm. Hence our numerical results demonstrate the huge variance reduction effect of the martingale basis 
algorithm. In the Bermudan option case, the primal-dual algorithm still yields 95\%-confidence intervals with a relative width of less than 1\% for up to $n=160$ time steps, when the input 
approximation is computed by the Lemor-Gobet-Warin algorithm with 6 basis functions and 1 million regression paths.

\section{The case of a non-convex generator}

In this section we drop the assumption on the convexity of the generator $f$ and merely assume that the standing assumptions are in force. In this situation the construction
of confidence bounds for $Y_0$ can be based on local approximations of $f$ by convex and concave generators.

\subsection{Upper bounds}

We first turn to the construction of upper bounds.
For fixed $i=0,\ldots,n-1$ we assume that some approximation $(\tilde Y_j, \tilde Z_j)_{j=i,\ldots,n-1}$ of $(Y_j, E_j[\beta_{j+1}Y_{j+1}])_{j=i,\ldots,n-1}$ is given. 
This approximation can be pre-computed by any algorithm. 
We merely assume that the approximation is adapted and satisfies
$$
E\left[\sum_{j=i}^{n-1} \left(|\tilde Y_j|+ \sum_{d=1}^D \alpha^{(d)}_j|\tilde Z_{d,j}|\right) \Delta_j\right]<\infty.
$$
The set of such admissible input approximations is denoted by $\mathcal{A}_i$.

We now choose
a measurable function
$$
h^{up}: \Omega \times \{0,\ldots,n\} \times \mathbb{R} \times \mathbb{R}^D\times \mathbb{R} \times \mathbb{R}^D \rightarrow \mathbb{R}
$$
with the following properties:
\begin{itemize}
 \item[a)] $h^{up}(\cdot,\tilde y,\tilde z;y,z)$ is adapted for every $(\tilde y,\tilde z),(y,z)\in \mathbb{R} \times \mathbb{R}^D$.
Moreover $h^{up}$ satisfies the stochastic Lipschitz condition
$$
|h^{up}(i,\tilde y,\tilde z;y,z)-h^{up}(i,\tilde y,\tilde z;y',z')|\leq \alpha^{(0)}_i |y-y'|+\sum_{d=1}^D  \alpha^{(d)}_i |z_d-z_d'|
$$
for every $(\tilde y,\tilde z), (y,z), (y',z') \in \mathbb{R} \times \mathbb{R}^D$ (with the same stochastic Lipschitz constants as $f$).
\item[b)] $h^{up}(i,\tilde y,\tilde z;y,z)$ is convex in $(y,z)$, $h^{up}(i,\tilde y,\tilde z;0,0)=0$ for every $(\tilde y,\tilde z)\in \mathbb{R} \times \mathbb{R}^D$, and
$$
h^{up}(i,\tilde y,\tilde z;\tilde y-y,\tilde z-z)\geq f(i,y,z)-f(i,\tilde y,\tilde z)
$$
for every $(\tilde y,\tilde z),(y,z)\in \mathbb{R} \times \mathbb{R}^D$.
\end{itemize}
\begin{rem}
 Given $h^{up}$ and the approximation  $(\tilde Y_i, \tilde Z_i)$ we can define a new generator
$$
f^{up}(i,y,z):=f(i,\tilde Y_i, \tilde Z_i)+h^{up}(i,\tilde Y_i,\tilde Z_i; \tilde Y_i-y, \tilde Z_i-z).
$$
Then $f^{up}(i,y,z)$ is convex in $(y,z)$ and dominates the original generator $f$, i.e.   $f^{up}(i,y,z)\geq f(i,y,z)$. Moreover,
\begin{eqnarray*}
 && E[|f^{up}(i,Y_i,E_i[\beta_{i+1}Y_{i+1}])-f(i,Y_i,E_i[\beta_{i+1}Y_{i+1}])|]\\ &\leq&  2E\left[\alpha^{(0)}_i |\tilde Y_i-Y_i|+\sum_{d=1}^D  \alpha^{(d)}_i |\tilde Z_i-E_i[\beta_{i+1}Y_{i+1}]|\right],
\end{eqnarray*}
which shows that -- evaluated at the true solution $(Y_i, E_i[\beta_{i+1}Y_{i+1}])$ -- the auxiliary generator $f^{up}$ approximates the true generator $f$, as the approximation $(\tilde Y,\tilde Z)$ approaches the true 
solution.
\end{rem}

A generic choice is the function
$$
h^{|up|}(i,\tilde y, \tilde z;y,z)= \alpha^{(0)}_i |y|+\sum_{d=1}^D  \alpha^{(d)}_i |z_d|,
$$
which obviously satisfies the properties a) and b) above. We will illustrate in the numerical examples below, that it might be beneficial to tailor the function $h^{up}$ to the specific problem instead of applying the generic choice
$h^{|up|}$.

Given $h^{up}$, $(\tilde Y, \tilde Z)$ we define $\Theta_i^{h^{up}}=\Theta^{h^{up}}_i(\tilde Y,\tilde Z) $ via
\begin{eqnarray}\label{pathwise2}
\Theta^{h^{up}}_i&=&\max\{S_i,\Theta^{h^{up}}_{i+1}-(\tilde Y_{i+1}-E_i[\tilde Y_{i+1}]) +f_i(\tilde Y_i, \tilde Z_i)\Delta_i \nonumber \\ && +h^{up}(i,\tilde Y_i,\tilde Z_i;\tilde Y_i-\Theta^{h^{up}}_i,\tilde Z_i- \beta_{i+1} \Theta_{i+1}^{h^{up}}+ \beta_{i+1} \tilde Y_{i+1}- E_i[\beta_{i+1} \tilde Y_{i+1}]) \Delta_i\},
\end{eqnarray}
initiated at $\Theta^{h^{up}}_n=S_n$. We then obtain the following minimization problem with value process $Y_i$ in terms of $\Theta^{h^{up}}_i(\tilde Y,\tilde Z)$.
\begin{thm}\label{thm:dualgeneric}
 For every $i=0,\ldots,n$,
\begin{eqnarray*}
 Y_i=\essinf_{(\tilde Y,\tilde Z) \in \mathcal{A}_i} E_i[\Theta^{h^{up}}_i(\tilde Y,\tilde Z)].
\end{eqnarray*}
Moreover, a minimizing pair is given by $(Y^*_j, Z^*_j)=(Y_j,E_j[\beta_{j+1}Y_{j+1}])$ which even satisfies the principle of pathwise optimality.
\end{thm}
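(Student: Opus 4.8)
The plan is to reduce the claim to the convex case already established in Theorem~\ref{thm:convexdual}. The key observation is that, for a fixed admissible pair $(\tilde Y,\tilde Z)\in\mathcal A_i$, the process $\Theta^{h^{up}}_i(\tilde Y,\tilde Z)$ defined by (\ref{pathwise2}) is precisely the pathwise process $\theta^{up}_i$ of (\ref{pathwise}) built from the auxiliary generator
$$f^{up}(j,y,z):=f(j,\tilde Y_j,\tilde Z_j)+h^{up}(j,\tilde Y_j,\tilde Z_j;\tilde Y_j-y,\tilde Z_j-z)$$
introduced in the Remark, together with the pair of martingales $(M^0,M)$, where $M^0$ is the Doob martingale of $(\tilde Y_j)_j$ and $M$ the Doob martingale of $(\beta_j\tilde Y_j)_j$. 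Indeed, substituting $y=\Theta^{h^{up}}_j$ and $z=\beta_{j+1}\Theta^{h^{up}}_{j+1}-(M_{j+1}-M_j)$ into $f^{up}$ reproduces exactly the bracketed $f+h^{up}$ term in (\ref{pathwise2}), while $M^0_{j+1}-M^0_j=\tilde Y_{j+1}-E_j[\tilde Y_{j+1}]$ and $M_{j+1}-M_j=\beta_{j+1}\tilde Y_{j+1}-E_j[\beta_{j+1}\tilde Y_{j+1}]$ match the subtracted increments there.

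The first step is to check that $f^{up}$ meets the standing assumptions of Section~2. Convexity in $(y,z)$ and the domination $f^{up}\geq f$ are noted in the Remark and follow from properties a) and b) of $h^{up}$; the stochastic Lipschitz condition holds with the \emph{same} constants $\alpha^{(d)}$ by property a), so relations (\ref{Lip2}) are unchanged. Integrability of $f^{up}(\cdot,0,0)$ follows by bounding $|f^{up}(j,0,0)|\leq |f(j,\tilde Y_j,\tilde Z_j)|+\alpha^{(0)}_j|\tilde Y_j|+\sum_{d=1}^D\alpha^{(d)}_j|\tilde Z_{d,j}|$, using $h^{up}(j,\tilde y,\tilde z;0,0)=0$ together with the Lipschitz bound, and then invoking the integrability built into $\mathcal A_i$ and the standing assumptions on $f$. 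Hence the dynamic program (\ref{RBSDE}) with generator $f^{up}$ has a unique solution $Y^{up}$, the pathwise process $\theta^{up}(M^0,M)=\Theta^{h^{up}}(\tilde Y,\tilde Z)$ is well defined and integrable, and $(M^0,M)\in\mathcal M_{1+D}$: by $\sum_d\alpha^{(d)}_j|\beta_{d,j+1}|\leq 1/\Delta_j$, the sum $\sum_{d=1}^D\sum_j E[\alpha^{(d)}_j|M_{d,j+1}-M_{d,j}|]\Delta_j$ is dominated by $2\sum_j E[|\tilde Y_{j+1}|]<\infty$.

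The second step combines the pieces. Applying Theorem~\ref{thm:convexdual} to the convex generator $f^{up}$ gives $E_i[\Theta^{h^{up}}_i(\tilde Y,\tilde Z)]=E_i[\theta^{up}_i(M^0,M)]\geq Y^{up}_i$; since $f^{up}\geq f$, the process $Y^{up}$ is a supersolution of (\ref{RBSDE}) for the original $f$, so Proposition~\ref{prop:comparison} yields $Y^{up}_i\geq Y_i$, and taking the essential infimum over $(\tilde Y,\tilde Z)\in\mathcal A_i$ gives one inequality. For the reverse inequality and the optimality assertion I would specialize to $(Y^*_j,Z^*_j):=(Y_j,E_j[\beta_{j+1}Y_{j+1}])$, which lies in $\mathcal A_i$ by integrability of $Y$ and the bound $\sum_d\alpha^{(d)}_j|E_j[\beta_{d,j+1}Y_{j+1}]|\leq \Delta_j^{-1}E_j[|Y_{j+1}|]$. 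For this choice $h^{up}(j,Y_j,E_j[\beta_{j+1}Y_{j+1}];0,0)=0$ forces the associated generator $f^{up,*}$ to coincide with $f$ along $(Y_j,E_j[\beta_{j+1}Y_{j+1}])$, so $Y$ itself solves the dynamic program with generator $f^{up,*}$ and, by uniqueness, $Y^{up,*}=Y$. Moreover $M^0$ and $M$ are then the Doob martingales of $Y^{up,*}$ and $\beta Y^{up,*}$, hence exactly the pathwise-optimal martingales of Theorem~\ref{thm:convexdual} for $f^{up,*}$; therefore $\Theta^{h^{up}}_i(Y^*,Z^*)=\theta^{up}_i(M^0,M)=Y^{up,*}_i=Y_i$ $P$-a.s., which simultaneously establishes that the essential infimum is attained and that $(Y^*,Z^*)$ obeys the principle of pathwise optimality.

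The appeals to Theorem~\ref{thm:convexdual} and Proposition~\ref{prop:comparison} do all the real work once the reduction is in place; the only step requiring genuine care is the verification in the first step that $f^{up}$ inherits every standing assumption — most notably the $L^1$ bound on $f^{up}(\cdot,0,0)$ — and that the two Doob martingales belong to $\mathcal M_{1+D}$ with respect to the new generator. After these routine checks, no further argument is needed.
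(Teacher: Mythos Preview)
Your proof is correct and follows essentially the same route as the paper: both fix $(\tilde Y,\tilde Z)$, pass to the convex auxiliary generator $f^{up}$, use Theorem~\ref{thm:convexdual} together with Proposition~\ref{prop:comparison} to obtain $E_i[\Theta^{h^{up}}_i(\tilde Y,\tilde Z)]\geq Y^{up}_i\geq Y_i$, and then establish pathwise optimality at $(Y^*,Z^*)$. The only minor difference is that for the latter step the paper carries out the backward induction on $\Theta^{h^{up},*}_j$ directly, whereas you first argue by uniqueness that $Y^{up,*}=Y$ and then invoke the pathwise optimality clause of Theorem~\ref{thm:convexdual}; this is an equivalent and slightly cleaner shortcut.
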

 \begin{proof}
We fix a pair of adapted and integrable processes $(\tilde Y,\tilde Z)$ and
define $Y^{up}_j$, $j\geq i$, as
$$
Y^{up}_j=\max\{S_j,E_j[Y^{up}_{j+1}] +[f(j,\tilde Y_j, \tilde Z_j)+h^{up}(j,\tilde Y_j,\tilde Z_j; \tilde Y_j-Y^{up}_j, \tilde Z_j-E_j[\beta_{j+1}Y^{up}_{j+1}])]\Delta_j\},\quad Y^{up}_n=S_n,
$$
which satisfies $Y^{up}_i\geq Y_i$ by the comparison result in Proposition \ref{prop:comparison}. Then, an application of Theorem \ref{thm:convexdual}, with $Y_i$ replaced by
$Y^{up}_i$ yields $E_i[\Theta^{h^{up}}_i(\tilde Y,\tilde Z)]\geq Y^{up}_i$. Hence,
$$
Y_i\leq \essinf_{(\tilde Y,\tilde Z) \in\mathcal{A}_i} E_i[\Theta^{h^{up}}_i(\tilde Y,\tilde Z)].
$$
It now suffices to show that
$$
Y_j = \Theta^{h^{up}}_j( Y_\cdot, E_\cdot[\beta_{\cdot+1} Y_{\cdot+1}])=:\Theta^{h^{up},*}_j,
$$
$P$-almost surely for every $j=i,\ldots,n$. This is certainly true for $j=n$. Going backwards in time we obtain by induction
\begin{eqnarray*}
 &&\Theta^{h^{up},*}_j \\&=&\max\{S_j,Y_{j+1}-(Y_{j+1}-E_j[Y_{j+1}]) +f(j,Y_j, E_j[\beta_{j+1}Y_{j+1}])\Delta_j \nonumber \\ && +
h^{up}(j,Y_j,E_j[\beta_{j+1}  Y_{j+1}];Y_j-\Theta^{h^{up},*}_j,E_j[\beta_{j+1}Y_{j+1}]- \beta_{j+1} Y_{j+1}+ \beta_{j+1} Y_{j+1}- E_j[\beta_{j+1}  Y_{j+1}]) \Delta_j\}
\\ &=&  \max\{S_j,E_j[Y_{j+1}] +(f(j,Y_j, E_j[\beta_{j+1}Y_{j+1}])+
h^{up}(j,Y_j,E_j[\beta_{j+1}  Y_{j+1}];Y_j-\Theta^{h^{up},*}_j,0)) \Delta_j\}
\end{eqnarray*}
As $h^{up}(j,Y_j,E_j[\beta_{j+1}  Y_{j+1}];0,0)=0$, we observe that $Y_j$ also solves the above equation. Hence, by uniqueness (due to the Lipschitz assumption on $h^{up}$), we obtain
$Y_j=\Theta^{h^{up},*}_j$.
 \end{proof}

 \subsection{Lower bounds}

A maximization problem with value process $Y_i$ can be constructed analogously by bounding $f$ from below by a concave generator. The main difference is that in place of the results of Section 3
we now rely on the following  result for the concave case which is proved at the end of this section:
\begin{thm}\label{thm:concave}
  Suppose $f$ is concave in $(y,z)$. \\
(i) Then, for every $i=0,\ldots,n$, 
\begin{eqnarray*}
&& Y_i= \essinf_{M^0 \in \mathcal{M}_1}  \essinf_{(r,\rho) \in \mathcal{U}_i((-f)^\#)} E_i[\vartheta^{up}_i(r,\rho,M^0)  ], \qquad\text{where }\\
&& \vartheta^{up}_i(r,\rho,M^0) =  \max_{k=i,\ldots,n} \Gamma_{i,k}(-r,-\rho)S_k  + \sum_{j=i}^{k-1} \Gamma_{i,j}(-r,-\rho) \frac{(-f)^\#(j,r_j,\rho_j)\Delta_j}{1+r_j \Delta_j}-(M^0_k-M^0_i).
\end{eqnarray*}
Minimizers (even in the sense of pathwise optimality) are given by 
$(r_j^*,\rho_j^*)_{j\geq i}$  satisfying 
\begin{equation}
-r^*_jY_j-{\rho^*_j}^\top E_j[\beta_{j+1}Y_{j+1}]+(-f)^\#(j,r^*_j,\rho^*_j)=f(j,Y_j, E_j[\beta_{j+1}Y_{j+1}]) \label{optimal_concave}
\end{equation}
and $M^{0,*}$ being the martingale part of the Doob decomposition of $(Y_j \Gamma_{i,j}(-r^*,-\rho^*))_{j\geq i}$.
\\ (ii) Given a stopping time $\tau\in \bar{\mathcal{S}}_i$ and a martingale $(M^0,M)\in \mathcal{M}_{1+D}$, define ${\vartheta}^{low}_j={\vartheta}^{low}_j(\tau, M^0,M)$ 
for $i\leq j <\tau$
via 
$$
{\vartheta}^{low}_j={\vartheta}^{low}_{j+1}-(M^0_{j+1}-M^0_j) +f(j,{\vartheta}^{low}_j, \beta_{j+1} {\vartheta}^{low}_{j+1}-(M_{j+1}-M_j) )\Delta_j,\quad {\vartheta}^{low}_\tau=S_\tau.
$$
Then,
\begin{eqnarray*}
 Y_i&=&\esssup_{\tau \in \bar{\mathcal{S}}_i}  \esssup_{ (M^0, M)\in \mathcal{M}_{1+D}} E_i[{\vartheta}^{low}_i(\tau, M^0,M)]
\end{eqnarray*}
A maximizer (even in the sense of pathwise optimality) is given by the triplet $(\tau^*_i, M^{0,*},  M^{*})$, where $\tau^*_i$ was defined in (\ref{tau*}) and $ M^{0,*},\;  M^{*}$ are the Doob martingales 
of $Y$ and $\beta Y$, respectively. \\
\end{thm}
This result is not completely symmetric to the convex case, because the reflection at a lower barrier
 (i.e. application of the maximum-operator) is  convex. Note that if $f$ is concave itself then the upper and lower bounds from  Theorem \ref{thm:concave}
are preferable to the upper bound of Theorem \ref{thm:dualgeneric} and to the generic lower bounds which are constructed next.

We denote by $h^{low}$ any mapping which satisfies the same properties as $h^{up}$ but with
condition b) replaced by
\begin{itemize}
 \item[b')] $h^{low}(i,\tilde y,\tilde z;y,z)$ is concave in $(y,z)$, $h^{low}(i,\tilde y,\tilde z;0,0)=0$ for every $(\tilde y,\tilde z)\in \mathbb{R} \times \mathbb{R}^D$, and
$$
h^{low}(i,\tilde y,\tilde z;\tilde y-y,\tilde z-z)\leq f(i,y,z)-f(i,\tilde y,\tilde z)
$$
for every $(\tilde y,\tilde z),(y,z)\in \mathbb{R} \times \mathbb{R}^D$.
\end{itemize}
The generic choice is now
$$
h^{|low|}(i,\tilde y, \tilde z;y,z)= -\alpha^{(0)}_i |y|-\sum_{d=1}^D  \alpha^{(d)}_i |z_d|.
$$
Given $h^{low}$, a pair of adapted processes $(\tilde Y, \tilde Z)$ and a stopping time $\tau \in \bar{\mathcal{S}}_0$ we define $\Theta_i^{h^{low}}=\Theta^{h^{low}}_i(\tilde Y,\tilde Z,\tau) $ via
\begin{eqnarray}\label{pathwise3}
\Theta^{h^{low}}_i&=&\Theta^{h^{low}}_{i+1}-(\tilde Y_{i+1}-E_i[\tilde Y_{i+1}]) +f_i(\tilde Y_i, \tilde Z_i)\Delta_i \nonumber \\ && +h^{low}(i,\tilde Y_i,\tilde Z_i;\tilde Y_i-\Theta^{h^{low}}_i,\tilde Z_i- \beta_{i+1} \Theta_{i+1}^{h^{low}}+ \beta_{i+1} \tilde Y_{i+1}- E_i[\beta_{i+1} \tilde Y_{i+1}]) \Delta_i,
\end{eqnarray}
for $i<\tau$ initiated at $\Theta^{h^{low}}_\tau=S_\tau$.
Making use of Theorem \ref{thm:concave} and the same arguments as in Theorem \ref{thm:dualgeneric} we obtain:
\begin{thm}\label{thm:dualgeneric2}
  For every $i=0,\ldots,n$,
\begin{eqnarray*}
 Y_i=\esssup_{\tau \in \bar{\mathcal{S}}_i}\esssup_{(\tilde Y,\tilde Z) \in \mathcal{A}_i} E_i[\Theta^{h^{low}}_i(\tilde Y,\tilde Z,\tau)].
\end{eqnarray*}
Moreover, a minimizing triplet is given by $(Y^*_j, Z^*_j,\tau^*)=(Y_j,E_j[\beta_{j+1}Y_{j+1}],\tau^*_i)$ which even satisfies the principle of pathwise optimality. (We recall that
$\tau^*_i$ was defined in (\ref{tau*})).
\end{thm}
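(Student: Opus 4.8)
The plan is to reduce the statement to Theorem~\ref{thm:concave}(ii) applied to a well-chosen concave auxiliary generator, exactly as Theorem~\ref{thm:dualgeneric} is reduced to Theorem~\ref{thm:convexdual}. Fix $i$, a stopping time $\tau\in\bar{\mathcal S}_i$, and an admissible input approximation $(\tilde Y,\tilde Z)\in\mathcal A_i$, and set
$$
f^{low}(j,y,z):=f(j,\tilde Y_j,\tilde Z_j)+h^{low}(j,\tilde Y_j,\tilde Z_j;\tilde Y_j-y,\tilde Z_j-z).
$$
By property a) of $h^{low}$ this generator inherits the stochastic Lipschitz constants of $f$, hence satisfies (\ref{Lip1})--(\ref{Lip2}), and $f^{low}(j,0,0)=f(j,\tilde Y_j,\tilde Z_j)+h^{low}(j,\tilde Y_j,\tilde Z_j;\tilde Y_j,\tilde Z_j)$ is integrable because $(\tilde Y,\tilde Z)\in\mathcal A_i$. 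By property b'), $f^{low}(j,\cdot)$ is concave in $(y,z)$ (a concave function composed with an affine map) and $f^{low}(j,y,z)\le f(j,y,z)$ for all $(y,z)$. Consequently the solution $\bar Y$ of the reflected dynamic program with generator $f^{low}$ (which exists and is integrable, since $f^{low}$ meets the standing assumptions) is a subsolution of (\ref{RBSDE}), so Proposition~\ref{prop:comparison} gives $\bar Y_j\le Y_j$ for all $j$.

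The key observation is that (\ref{pathwise3}) \emph{is} the $\vartheta^{low}$-recursion of Theorem~\ref{thm:concave}(ii) for the concave generator $f^{low}$. Indeed, let $M^{0,\tilde Y}$ be the Doob martingale of $\tilde Y$ and $M^{\tilde Y}$ the Doob martingale of $\beta\tilde Y$; then $M^{0,\tilde Y}_{j+1}-M^{0,\tilde Y}_j=\tilde Y_{j+1}-E_j[\tilde Y_{j+1}]$ and $M^{\tilde Y}_{j+1}-M^{\tilde Y}_j=\beta_{j+1}\tilde Y_{j+1}-E_j[\beta_{j+1}\tilde Y_{j+1}]$, and substituting $y=\Theta^{h^{low}}_j$, $z=\beta_{j+1}\Theta^{h^{low}}_{j+1}-(M^{\tilde Y}_{j+1}-M^{\tilde Y}_j)$ into the definition of $f^{low}$ reproduces verbatim the $h^{low}$-term in (\ref{pathwise3}). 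Thus $\Theta^{h^{low}}_i(\tilde Y,\tilde Z,\tau)=\vartheta^{low}_i(\tau,M^{0,\tilde Y},M^{\tilde Y})$ in the notation of Theorem~\ref{thm:concave}(ii) for $f^{low}$. One checks $(M^{0,\tilde Y},M^{\tilde Y})\in\mathcal M_{1+D}$ by the estimate $\sum_d\sum_j E[\alpha^{(d)}_j|M^{\tilde Y}_{d,j+1}-M^{\tilde Y}_{d,j}|]\Delta_j\le 2\sum_j E[|\tilde Y_{j+1}|]<\infty$ already used in Theorem~\ref{thm:convexdual}, applied to $\tilde Y$ in place of $Y$ and using (\ref{Lip1}) together with $(\tilde Y,\tilde Z)\in\mathcal A_i$. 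Integrability of $\Theta^{h^{low}}$ itself follows from the Lipschitz bounds as in Theorem~\ref{thm:convexdual}. Theorem~\ref{thm:concave}(ii) applied to $f^{low}$ then yields $E_i[\Theta^{h^{low}}_i(\tilde Y,\tilde Z,\tau)]\le\bar Y_i\le Y_i$, and taking the double essential supremum over $\tau\in\bar{\mathcal S}_i$ and $(\tilde Y,\tilde Z)\in\mathcal A_i$ gives the inequality ``$\le$''.

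For the reverse inequality and the (pathwise) optimality of the triplet $(Y_\cdot,E_\cdot[\beta_{\cdot+1}Y_{\cdot+1}],\tau^*_i)$ I would proceed by backward induction on $j$ from $\tau^*_i$ down to $i$, showing $\Theta^{h^{low}}_j(Y,E_\cdot[\beta_{\cdot+1}Y_{\cdot+1}],\tau^*_i)=Y_j$ $P$-a.s. The base $j=\tau^*_i$ is trivial; in the step, the choice $\tilde Y=Y$, $\tilde Z=E_\cdot[\beta_{\cdot+1}Y_{\cdot+1}]$ makes the Doob-martingale increment collapse $\Theta^{h^{low}}_{j+1}=Y_{j+1}$ down to $E_j[Y_{j+1}]$ and forces the $z$-argument of $h^{low}$ to vanish, so that, using $h^{low}(j,\cdot;0,0)=0$, the recursion reduces to $\Theta^{h^{low}}_j=E_j[Y_{j+1}]+f(j,Y_j,E_j[\beta_{j+1}Y_{j+1}])\Delta_j+h^{low}(j,Y_j,E_j[\beta_{j+1}Y_{j+1}];Y_j-\Theta^{h^{low}}_j,0)\Delta_j$; since $j<\tau^*_i$, the value $\Theta^{h^{low}}_j=Y_j$ solves this fixed-point equation by the defining property (\ref{tau*}) of $\tau^*_i$, and it is the unique solution since the map $\Theta\mapsto h^{low}(j,\cdot;Y_j-\Theta,0)\Delta_j$ is a contraction with modulus $\alpha^{(0)}_j\Delta_j<1$. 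Conditioning at time $i$ gives $E_i[\Theta^{h^{low}}_i(Y,E_\cdot[\beta_{\cdot+1}Y_{\cdot+1}],\tau^*_i)]=Y_i$, which combined with ``$\le$'' proves the representation and, since in fact $\Theta^{h^{low}}_i=Y_i$ almost surely, the principle of pathwise optimality; one notes $\tau^*_i\in\bar{\mathcal S}_i$ because exercise is never optimal where $S=-\infty$, and $(Y_\cdot,E_\cdot[\beta_{\cdot+1}Y_{\cdot+1}])\in\mathcal A_i$ by the same estimate as in the proof of Theorem~\ref{thm:primal}. The main obstacle is purely bookkeeping: verifying that $f^{low}$ satisfies all standing assumptions so that Theorem~\ref{thm:concave}(ii) is literally applicable, and matching (\ref{pathwise3}) term-by-term with the $\vartheta^{low}$-recursion for $f^{low}$; once these identifications are made everything reduces to the concave-case theorem and the short induction above.
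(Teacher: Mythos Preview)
Your proposal is correct and follows exactly the route the paper has in mind: the paper's own proof consists of the single sentence ``Making use of Theorem~\ref{thm:concave} and the same arguments as in Theorem~\ref{thm:dualgeneric}'', and what you have written is precisely that argument spelled out---defining the concave minorant $f^{low}$, identifying the recursion (\ref{pathwise3}) with the $\vartheta^{low}$-recursion of Theorem~\ref{thm:concave}(ii) for $f^{low}$ via the Doob martingales of $\tilde Y$ and $\beta\tilde Y$, invoking comparison to pass from $\bar Y$ to $Y$, and closing with the same backward-induction contraction argument for pathwise optimality as in Theorem~\ref{thm:dualgeneric}. Your explicit identification of $\Theta^{h^{low}}$ with $\vartheta^{low}$ is useful and makes the reduction transparent.
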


\begin{exmp}\label{exmp:generic}
 For the generic choices $h^{|up|}$ and $h^{|low|}$, we can apply Proposition \ref{prop:max} in order to make the recursion formulas in (\ref{pathwise2}) and (\ref{pathwise3}) explicit.
They read
\begin{eqnarray*}
 \Theta^{h^{|up|}}_i&=&\max\Bigl\{S_i,\sup_{r\in\{-\alpha^{(0)}_i,\alpha^{(0)}_i\}} \frac{1}{1+r\Delta_i}\Bigl(\Theta^{h^{|up|}}_{i+1}-(\tilde Y_{i+1}-E_i[\tilde Y_{i+1}])
+f(i,\tilde Y_i, \tilde Z_i)\Delta_i \nonumber \\ && +\sum_{d=1}^D |\tilde Z_{d,i}- \beta_{d,i+1} \Theta_{i+1}^{h^{|up|}}+ \beta_{d,i+1} \tilde Y_{i+1}- E_i[\beta_{d,i+1} \tilde Y_{i+1}]|\Delta_i \Bigr) \Bigr\},
\end{eqnarray*}
and
\begin{eqnarray*}
 \Theta^{h^{|low|}}_i&=&\inf_{r\in\{-\alpha^{(0)}_i,\alpha^{(0)}_i\}} \frac{1}{1+r\Delta_i}\Bigl(\Theta^{h^{|low|}}_{i+1}-(\tilde Y_{i+1}-E_i[\tilde Y_{i+1}])
+f(i,\tilde Y_i, \tilde Z_i)\Delta_i \nonumber \\ && -\sum_{d=1}^D \alpha^{(d)}_i\ |\tilde Z_{d,i}- \beta_{d,i+1} \Theta_{i+1}^{h^{|low|}}+ \beta_{d,i+1} \tilde Y_{i+1}- E_i[\beta_{d,i+1} \tilde Y_{i+1}]|\Delta_i \Bigr) .
\end{eqnarray*}
The main advantage of the corresponding upper and lower bounds is that they can be calculated generically without any extra information on $f$ (such as the convex conjugates which were
required in the section on convex generators). There is, however, a price to pay for this generic approach. Indeed, given the Lipschitz process $\alpha^{(d)}_i$, the choice
$h^{|up|}$, $h^{|low|}$ can be shown to lead to the crudest upper and lower bounds among all admissible functions $h^{up}$, $h^{low}$, i.e.
$$
E_i[\Theta^{h^{|up|}}_i(\tilde Y,\tilde Z)]\geq E_i[\Theta^{h^{up}}_i(\tilde Y,\tilde Z)]
$$
for every pair $(\tilde Y, \tilde Z)\in \mathcal{A}_i$, and analogously for the lower bounds.
 In practice, the generic bounds may be too crude, when $D$ is large
and the approximation $\tilde Z_j$ of $E_j[\beta_{j+1}Y_{j+1}]$ is not yet very good. In general we therefore recommend to choose the functions $h^{up}$ and $h^{low}$ in a way that
$h^{up}(j,\tilde Y_j,\tilde Z_j;y,z)$ and $h^{low}(j,\tilde Y_j,\tilde Z_j;y,z)$ 
are close
to zero in a neighborhood of zero in the $(y,z)$-coordinates, in which one expects the residuals $(\tilde Y_j-Y_j, \tilde Z_j-E_j[\beta_{j+1}Y_{j+1}])$ to be
typically located.
\end{exmp}

We close this section with the proof of Theorem \ref{thm:concave}.

\begin{proof}[Proof of Theorem \ref{thm:concave}]
 (i) Given $(r,\rho)\in \mathcal{U}_i((-f)^\#)$ and $j=i,\ldots,n$, $k=j,\ldots,n$, define 
$$
Y_j(k,r,\rho)=E_j\left[\Gamma_{j,k}(-r,-\rho)S_k  + \sum_{l=j}^{k-1} \Gamma_{j,l}(-r,-\rho) \frac{(-f)^\#(l,r_l,\rho_l)\Delta_l}{1+r_l \Delta_l}\right]
$$
Then, the optional sampling theorem yields for every stopping time $\tau\in \bar{\mathcal{S}}_i$ and every martingale $M^0\in \mathcal{M}_1$
\begin{eqnarray*}
 Y_i(\tau,r,\rho)&=&E_i\left[\Gamma_{i,\tau}(-r,-\rho)S_\tau  + \sum_{l=i}^{\tau-1} \Gamma_{j,l}(-r,-\rho) \frac{(-f)^\#(l,r_l,\rho_l)\Delta_l}{1+r_l \Delta_l}-(M^0_\tau-M^0_i)\right]\\
&\leq&  E_i[\vartheta^{up}_i(r,\rho,M^0)].
\end{eqnarray*}
The same argument as in the first part of the proof of Theorem \ref{thm:primal} now shows by concavity that
$
Y_i(\tau,r,\rho)\geq Y_i.
$
Hence,
$$
E_i[\vartheta^{up}_i(r,\rho,M^0)]\geq Y_i.
$$
Now we denote the Doob martingale of $Y_j \Gamma_{i,j}(-r^*,-\rho^*)$ by $M^{0,*}_j$ for $j=i,\ldots, n$, and choose a pair $(r^*,\rho^*)\in \mathcal{U}_i((-f)^\#)$ which satisfies (\ref{optimal_concave}). Such a pair exists again by Lemma \ref{lem:subgradient}. 
Define
$$
\vartheta_j^*:=\vartheta^{up}_j(r^*,\rho^*,M^{0,*}),\quad j=i,\ldots, n.
$$
We show by induction on $j=n,\ldots,i$, that $Y_j=\vartheta_j^*$. Note first that $Y_n=S_n=\vartheta_n^*$. In order to prove the claim for $j=i,\ldots,n-1$
we first observe that 
\begin{eqnarray*}
 M^{0,*}_k-M^{0,*}_j&=&\sum_{l=j}^{k-1} \left(Y_{l+1} \Gamma_{i,l+1}(-r^*,-\rho^*)-E_l[Y_{l+1} \Gamma_{i,l+1}(-r^*,-\rho^*)]\right) \\
&=& \sum_{l=j}^{k-1} \Gamma_{i,l}\frac{Y_{l+1}-E_l[Y_{l+1}]-(\rho^*_l)^\top(\beta_{l+1}Y_{l+1}-E_{l}[\beta_{l+1}Y_{l+1}])\Delta_l}{1+r^*_l\Delta_l}
\end{eqnarray*}
for $k=j,\dots, n$.
Hence, 
\begin{eqnarray*}
 \vartheta_j^*&=&  \max_{k=j,\ldots,n} \Biggl( \Gamma_{j,k}(-r^*,-\rho^*)S_k  \\&&+ \sum_{l=j}^{k-1} \Gamma_{j,l}(-r^*,-\rho^*) \frac{((-f)^\#(l,r^*_l,\rho^*_l)+(\rho^*_l)^\top(\beta_{l+1}Y_{l+1}-E_{l}[\beta_{l+1}Y_{l+1}])\Delta_l
-Y_{l+1}+E_l[Y_{l+1}]}{1+r^*_l \Delta_l}\Biggr) \\ &=&
\max\{S_j, \vartheta_{j+1}^*-Y_{j+1}+E_j[Y_{j+1}]\\ &&\quad+(-r^*_j\vartheta_j^*-(\rho^*_j)^\top \beta_{j+1}\vartheta^*_{j+1} + (-f)^\#(j,r^*_j,\rho^*_j)+(\rho^*_j)^\top(\beta_{j+1}Y_{j+1}-E_{j}[\beta_{j+1}Y_{j+1}]))\Delta_j\}.
\end{eqnarray*}
By the induction hypothesis and (\ref{optimal_concave}) we obtain 
$$
\vartheta_j^*=\max\{S_j,E_j[Y_{j+1}]+(f(j,Y_j, E_j[\beta_{j+1}Y_{j+1}])+r^*_j(Y_j-\vartheta_j^*))\Delta_j\}.
$$
As $Y_j$ is the unique solution of this equation we conclude that $\vartheta^*_j=Y_j$.
\\[0.1cm]
(ii) Fix $(M^0,M)\in \mathcal{M}_{1+D}$ and $\tau \in \bar{\mathcal{S}_i}$. Then, by concavity of $f$ we observe analogously to the proof of Theorem \ref{thm:convexdual} that
$E_j[{\vartheta}^{low}_j(\tau, M^0,M)]]$, $i\leq j\leq \tau$, is a subsolution to the nonreflected BSDE with generator $f$ and terminal time $\tau$. 
The solution of the latter BSDE was denoted by $Y^{(\tau)}_j$ in Proposition \ref{prop:stopping}.  
Hence, by Propositions \ref{prop:comparison} and \ref{prop:stopping}
$$
E_i[{\vartheta}^{low}_i(\tau, M^0,M)]]\leq Y^{(\tau)}_i \leq Y_i.
$$
In order to prove pathwise optimality of $(\tau^*_i, M^{0,*},  M^{*})$ one proceeds as in the proof of Theorem \ref{thm:convexdual}. The analogous induction argument shows that 
for $j=i,\ldots, \tau_i^*-1$
$$
{\vartheta}^{low}_j(\tau^*_i, M^{0,*},M^*)=E_j[Y_{j+1}]+f(j, {\vartheta}^{low}_j(\tau^*_i, M^{0,*},M^*), E_j[\beta_{j+1}Y_{j+1}])\Delta_j,
$$
which again, by the Lipschitz continuity of $f$, implies ${\vartheta}^{low}_j(\tau^*_i, M^{0,*},M^*)=Y_j$.
\end{proof}

\subsection{Numerical examples}

Once the functions $h^{low}$ and $h^{up}$ are chosen, an algorithm for computing confidence intervals for $Y_0$ based on Theorems \ref{thm:dualgeneric} and \ref{thm:dualgeneric2} can be designed 
analogously to the primal-dual algorithm in Section \ref{sec:algorithm} for the convex case. 

We first illustrate the algorithm in the context of Example \ref{exmp:intro} (ii). For the underlying, we choose the same five-dimensional geometric Brownian motion as  in Section \ref{sec:num1} except that $T=1$ and the drift and risk-free rate equal $R=0.02$. The payoff of the (European) claim is given by $G_n(x)=\min_{d=1,\ldots,D}x_d$. 
For the default risk function $Q$, we assume that there are three regimes, high risk, intermediate risk and low risk: There are thresholds $v^h<v^l$ and rates $\gamma^h> \gamma^l$ such that $Q(y)=\gamma^h$ for $y<v^h$ and $Q(y)=\gamma^l$ for $y>v^l$. Over $[v^h,v^l]$, $Q$ interpolates linearly.
The resulting function $f$ is Lipschitz continuous but generally neither convex nor concave. The candidates for the Lipschitz constant $\alpha^{(0)}$ are the absolute values of the left and right derivatives of $f$ in $v^h$ and $v^l$.
In the implementation, we stick to the generic choice 
$$
- h^{|low|}(i,\tilde y;y)= h^{|up|}(i,\tilde y;y)=\alpha^{(0)}|y|,
$$
using that the nonlinearity is independent of the $Z$-part in this example. We choose 
$$ v^h = 54,\quad v^l=90,\quad  \gamma^h=0.2,\quad \gamma^l=0.02 . $$ 
For the calculation of $\tilde{y}$, we use the Lemor-Gobet-Warin algorithm with two basis functions, $1$ and $E[G_n(X_n)|X_i=x]$, and $\Lambda^{reg}=100,000$.  Moreover, $\Lambda^{out}=4,000, \Lambda^{in}=1,000$.

In the absence of default risk, the claim's value is given by $78.37$. Table 2 displays upper and lower price bounds for different time discretizations and recovery rates $\delta$. As expected, a smaller recovery rate leads to a smaller option value.
The relative width of the  confidence intervals is well below $0.5 \%$ in all cases. For the larger values of $\delta$, the bounds are even tighter: 
Larger values of $\delta$ lead to less nonlinearity in the pricing problem and to smaller Lipschitz constants ($\alpha^{(0)}=0.41,0.27,0.12$ for $\delta=0,1/3,2/3$). 
Compared to the example of Section \ref{sec:num1}, the bounds are much less dependent on the time discretization. This is due to the fact, that no $Z$-part has to be approximated, as is the case for many BSDEs in the credit risk literature, see  \citet{crepey2013counterparty, henry2012counterparty}. To sum up, the generic approach is perfectly sufficient in this example.

\begin{table}
\begin{center}
\begin{tabular}{|c|c|c| c| c|}
\hline
 $\; \delta$  \textbackslash $\;\;n$	& 40 & 80 & 120 & 160\\
\hline
$0$ & $ \twert{  71.6551 }{0.0071  } \; \twert{ 71.8589 }{0.0068  }$ & $ \twert{ 71.6774  }{ 0.0072  } \; \twert{ 71.8828   }{ 0.0068   }$ & $ \twert{71.6664  }{ 0.0070   } \; \twert{71.8656   }{ 0.0068   }$ & $ \twert{ 71.6621  }{0.0069  } \; \twert{  71.8659 }{  0.0072 }$  \\
\hline
$\frac13$ &   $ \twert{74.1023  }{  0.0062  } \; \twert{ 74.2241   }{  0.0060  }$ & $ \twert{ 74.1010  }{  0.0065 } \; \twert{ 74.2225   }{  0.0062   }$ & $ \twert{ 74.1032  }{  0.0062 } \; \twert{  74.2229  }{  0.0061 }$ & $ \twert{74.1187   }{  0.0065 } \; \twert{74.2391  }{ 0.0063  }$   \\ 
\hline
$\frac23$ &   $ \twert{ 76.3335   }{ 0.0057  } \; \twert{ 76.3865   }{ 0.0057  }$ & $ \twert{ 76.3364 }{0.0057  } \; \twert{ 76.3886   }{ 0.0057  }$ & $ \twert{ 76.3416 }{ 0.0059  } \; \twert{76.3943  }{ 0.0058 }$ & $ \twert{ 76.3290 }{ 0.0061   } \; \twert{ 76.3814   }{0.0059  }$  \\ 
\hline
\end{tabular}
\end{center}
\caption{Upper and lower price bounds for different recovery rates and time discretizations. Standard deviations are in brackets.}
\end{table}

We finally revisit the example of Section \ref{sec:num1}. For the input approximation we run the martingale basis algorithm with seven basis functions for $Y$ and 1,000 regression paths as specified there. 
The confidence bounds for the European call spread option on the maximum of five Black-Scholes stocks are calculated with $\Lambda^{in}=\Lambda^{out}=1,000$ paths based on the following choices 
of $h^{low}$ and $h^{up}$. For the \emph{fully generic} implementation we apply
$$
-h^{|low|}(i,\tilde y,\tilde z;y,z)=h^{|up|}(i,\tilde y,\tilde z;y,z)=R^b |y|+\frac{\max\{|R^b-\mu|,|R^l-\mu|\} }{\sigma}\sum_{d=1}^5 |z_d| .
$$
 For the \emph{semi-generic} implementation we choose
\begin{eqnarray*}
 h^{low}(i,\tilde y,\tilde z;y,z)&=& R^l y+\frac{\mu-R^l}{\sigma}\sum_{d=1}^5 z_d -(R^b-R^l)\left(y-\frac{1}\sigma\sum_{d=1}^5 z_d\right)_{-},\\
h^{up}(i,\tilde y,\tilde z;y,z)&=& R^l y+\frac{\mu-R^l}{\sigma}\sum_{d=1}^5 z_d +(R^b-R^l)\left(y-\frac{1}\sigma\sum_{d=1}^5 z_d\right)_{+}.
\end{eqnarray*}
This choice only partially exploits the structure of the generator. It can be applied to any generator which is a linear function of $(y,z)$
 plus a nondecreasing $(R^b-R^l)$-Lipschitz continuous function 
of a linear combination of $(y,z)$. The specific form of the Lipschitz function is not used in this construction of $h^{low}$ and $h^{up}$,
but, of course, the coefficients for the linear combinations must be adjusted to the generator in the obvious way. 
For this semi-generic case the pathwise recursion formulas 
for $\Theta^{h^{up}}$ and $\Theta^{h^{low}}$ can be made explicit in time analogously to the generic case, which was discussed in Example \ref{exmp:generic}.

\begin{table}
\begin{center}
\begin{tabular}{|c|c|c| c| c|}
\hline
Algorithm  \textbackslash $\;\;n$	& 40 & 80 & 120 & 160\\
\hline
fully generic &  $ \twert{ 13.3604  }{ 0.0132    } \; \twert{14.1774   }{ 0.0169     }$ & $ \twert{12.7905    }{ 0.0332    } \; \twert{14.7496   }{0.0407     }$ & $ \twert{12.0148   }{0.0612     } \; \twert{15.8512   }{ 0.0834     }$ & $ \twert{ 10.7872   }{ 0.1005    } \; \twert{ 17.5326    }{ 0.1504      }$  \\
\hline
semi-generic &     $ \twert{ 13.7259   }{ 0.0041    } \; \twert{13.8505   }{0.0046     }$ & $ \twert{  13.6984  }{ 0.0053    } \; \twert{13.8801   }{  0.0059    }$ & 
$ \twert{13.6811    }{  0.0059    } \; \twert{ 13.9136   }{ 0.0071    }$ & $ \twert{ 13.6686     }{  0.0065   } \; \twert{13.9459   }{ 0.0078    }$  \\ 
\hline
\end{tabular}
\end{center}
\caption{Upper and lower price bounds for different time discretizations under the fully generic and semi-generic algorithms. Standard deviations are in brackets.}
\end{table}

Table 3 shows the resulting low-biased and high-biased estimates for the option price $Y_0$ as well as their empirical standard deviations. We observe that the generic bounds 
are not satisfactory in this example. The relative width of the 95\% confidence intervals ranges from about 6.5\% for $n=40$ to more than 65\% for $n=160$ time steps. 
This can be explained by the fact that the approximation of $E_i[\beta_{i+1}Y_{i+1}]$ by $\tilde Z_i$ (which is expressed in terms of just two basis functions) is not yet good enough. 
The quality  of $\tilde Z$ plays an all important role
for the generic bounds due to the appearance
of the terms $\sum_{d=1}^5 |z_d|$ in the definitions of $h^{|low|}$ and $h^{|up|}$. In the semi-generic setting the expressions of the form  $(y-\frac{1}\sigma\sum_{d=1}^5 z_d)_{\pm}$
in $h^{up}$ and $h^{low}$
are much more favorable concerning the approximation error of $E_i[\beta_{i+1}Y_{i+1}]$ by $\tilde Z_i$. Therefore, the semi-generic implementation yields much better 95\% confidence intervals
with a relative width of about 1\% for $n=40$ and still less than 2.5\% for $n=160$ time steps. 

By and large, this example shows that the generic bounds may be too crude, if applied to good but not excellent 
approximations $(\tilde Y,\tilde Z)$, in particular when the $z$-variable of the generator is high-dimensional. Nonetheless very acceptable confidence intervals can still be obtained based 
on the same approximation $(\tilde Y,\tilde Z)$, if some information about the generator is incorporated into the choice of $h^{up}$ and $h^{low}$.

\appendix
\section{Continuous time analogues}

In this appendix we consider BSDEs driven by a Brownian motion $W$ of the form
\begin{equation}\label{cBSDE}
Y_t=\xi+\int_t^T f(s,Y_s,Z_s)ds-\int_t^T Z^\top_s dW_s.
\end{equation}
We assume that the pair $(f,\xi)$ are  standard parameters in the sense of \citet{EKPQ}, p. 18, i.e. square-integrability conditions and a uniform Lipschitz condition on $f$ are in force. Moreover,
$f$ is supposed to be convex in $(y,z)$.

Then, by Proposition 3.4 in \citet{EKPQ}
$$
Y_t=\esssup_{(r,\rho)\in \mathcal{U}^2_t(f^\#)} E\left[\left. \gamma_{t,T}(r,\rho)\xi -\int_t^T \gamma_{t,s}(r,\rho) f^\#(s,r_s,\rho_s)ds\right|\mathcal{F}^W_t\right],
$$
 where $(\mathcal{F}^W_t)_{t\in [0,T]}$ is the augmented filtration generated by the driving Brownian motion,
$$
\gamma_{t,s}=\exp\left\{\int_t^s r_u du +\int_t^s \rho_u^\top dW_u \right\},
$$
and the supremum runs over the set
$$
\mathcal{U}^2_t(f^\#):=\left\{(r_s,\rho_s)_{s\geq t} \textnormal{ predictable};\; \int_t^T E[|f^\#(s,r_s,\rho_s)|^2]ds<\infty \right\}.
$$
This is the non-reflected continuous time analogue to the primal optimization problem in Theorem \ref{thm:primal} in a Brownian environment.

We now derive a continuous time version of 
the pathwise approach to the dual minimization problem in Theorem \ref{thm:convexdual}. On the one hand this continuous time version sheds additional light on the need to use a 
$(1+D)$-dimensional martingale in the upper bound construction in discrete time. On the other hand it might serve as a starting point for the design of alternative upper bound algorithms.

We shall make use of some basic tools from Malliavin calculus. 
For the corresponding definitions and notations we refer to \citet{Nu}. In order to simplify the notation, we assume that the driving Brownian motion is one-dimensional.
 Given a stochastic process
$\theta$ such that $\theta_t$ is Malliavin differentiable for a.e. $t\in[0,T]$, we denote by $D \theta$ the Malliavin derivative of $\theta$.
Notice that  the field $(D_s\theta_t)_{s,t \in [0,T]^2}$ is only defined almost everywhere on $[0,T]^2$, and consequently
the trace  $D_t\theta_t$ of $D_s\theta_t$ is not well-defined. We shall therefore make use of the one-sided trace $(D^+\theta)_t$, as introduced on p. 173 in \citet{Nu}
for $p=2$.

Now given a martingale $M^0$ such that $M^0_T\in \mathbb{D}^{1,2}$, (i.e. the random variable $M^0_T$ is Malliavin differentiable with square-integrable Malliavin derivative), 
we say that a possibly non-adapted process $\theta$ is a $M^0$-solution of
\begin{equation}\label{antBSDE}
-d\theta_t=f(t,\theta_t,(D^+ \theta)_t)dt-dM^0_t,\quad \theta_T=\xi
\end{equation}
if $E[\int_0^T |\theta_t|^2 dt  ] <\infty$, $(D^+\theta)_t$ exists, $f(\cdot,\theta_\cdot, (D^+\theta)_\cdot) \in \mathbb{L}^{1,2}$,
and for every $t\in[0,T]$
$$
\theta_t=\xi+\int_t^T f(s,\theta_s,(D^+ \theta)_s)ds-(M_T^0-M^0_t).
$$
Now suppose that $\theta$ is a $M^0$-solution for some martingale $M^0$ such that $M^0_T\in \mathbb{D}^{1,2}$. Define $\tilde Y_t=E[\theta_t|\mathcal{F}^W_t]$, $\tilde Z_t=E[(D^+ \theta)_t|\mathcal{F}^W_t]$,
and 
$$
c_s= E[f(s,\theta_s,(D^+ \theta)_s)|\mathcal{F}^W_s]- f(s,E[\theta_s|\mathcal{F}^W_s],E[(D^+ \theta)_s|\mathcal{F}^W_s]).
$$
Then,
\begin{equation}\label{A1}
\tilde Y_t+\int_0^t \left(f(s,\tilde Y_s, \tilde Z_s)+c_s\right)ds=E\left[\left. \xi+\int_0^T E[f(s,\theta_s,(D^+\theta)_s)|\mathcal{F}^W_s] ds \right|\mathcal{F}^W_t\right]=:\tilde M_t.
\end{equation}
Assuming that $\xi\in \mathbb{D}^{1,2}$, we next note that
\begin{equation}\label{A2}
 \tilde Z_t=E\left[\left.D_t\xi +\int_t^T D_tf(s,\theta_s, (D^+ \theta)_s)ds\right|\mathcal{F}^W_t\right].
\end{equation}
Indeed, by the martingale representation theorem and Lemma 1.3.4 in \citet{Nu}, there is an adapted process $u\in \mathbb{L}^{1,2}$ such that 
$$
M^0_t=M^0_0+\int_0^t u_s dW_s.
$$
Then, by Proposition 1.3.8 and the same argument as in Proposition 3.1.1 in \citet{Nu},
$$
(D^+ \theta)_t=D_t\xi +\int_t^T D_tf(s,\theta_s, (D^+ \theta)_s)ds - \int_t^T D_tu_s dW_s.
$$
The last integral is a martingale increment by adaptedness and square-integrability of the integrand. Hence, taking conditional expectation yields (\ref{A2}). 
We are now in the position to link $\tilde Z$ to the martingale $\tilde M$, which was defined in (\ref{A1}).
By the Clark-Ocone formula \citep[][Proposition 1.3.14]{Nu}, we obtain
\begin{eqnarray*}
 \tilde M_t-\tilde Y_0&=&\int_0^t E\left[\left. D_r \left(\xi+\int_0^T E[f(s,\theta_s,(D^+ \theta)_s)|\mathcal{F}^W_s] ds\right)\right|\mathcal{F}^W_r\right] dW_r \\
&=&  \int_0^t E\left[\left.D_r\xi + \int_r^T E[D_r f(s,\theta_s,(D^+ \theta)_s)|\mathcal{F}^W_s] ds\right|\mathcal{F}^W_r\right] dW_r
\\
&=&  \int_0^t E\left[\left.D_r\xi + \int_r^T D_r f(s,\theta_s,(D^+ \theta)_s)\right|\mathcal{F}^W_r\right] dW_r \\
&=&  \int_0^t \tilde Z_r dW_r,
\end{eqnarray*}
where we used Proposition 1.2.8 from \citet{Nu} to interchange Malliavin derivative and conditional expectation, and (\ref{A2}). 
Since $\tilde Y_T=E[\theta_T|\mathcal{F}^W_T]=\xi$, we conclude, thanks to (\ref{A1}), that $(\tilde Y, \tilde Z)$ solves the BSDE
$$
\tilde Y_t=\xi+\int_t^T  (f(s,\tilde Y_s, \tilde Z_s)+c_s)ds - \int_t^T \tilde Z_s dW_s.
$$
By the convexity of $f$  we observe that $c_s\geq 0$. Hence, by the comparison theorem \citep[see][Theorem 2.2]{EKPQ}, we end up with
$$
E[\theta_t|\mathcal{F}^W_t]=\tilde Y_t \geq Y_t.
$$
Finally, Proposition 5.3 in \citet{EKPQ} shows that the unique adapted solution $(Y,Z)$ to BSDE (\ref{cBSDE}) satisfies $Z_t=(D^+Y)_t$ under some technical conditions on $f$ and $\xi$,
which we assume
from now on. In particular, $Y$ is a $M^0$-solution to (\ref{antBSDE}) for $M^0=\int_0^\cdot Z_s dW_s$. 
Summarizing the above, we arrive at the following result:
\begin{prop}
Suppose that the assumptions of Proposition 5.3 in \citet{EKPQ} on $(f,\xi)$ are in force. Then,
$$
Y_t=\essinf_\theta  E[\theta_t|\mathcal{F}^W_t],
$$
where the infimum runs over the set of those processes $\theta$, which are $M^0$-solutions of (\ref{antBSDE}) for some martingale $M^0$ such that $M^0_T\in \mathbb{D}^{1,2}$.
\end{prop}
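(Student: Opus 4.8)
The plan is to read off the two inequalities from the discussion that precedes the statement and then combine them. \textbf{Upper bound.} Fix a martingale $M^0$ with $M^0_T \in \mathbb{D}^{1,2}$ and an arbitrary $M^0$-solution $\theta$ of (\ref{antBSDE}). Setting $\tilde Y_t := E[\theta_t|\mathcal{F}^W_t]$, $\tilde Z_t := E[(D^+\theta)_t|\mathcal{F}^W_t]$ and $c_s := E[f(s,\theta_s,(D^+\theta)_s)|\mathcal{F}^W_s] - f(s,\tilde Y_s,\tilde Z_s)$, equation (\ref{A1}) exhibits $\tilde M_t := \tilde Y_t + \int_0^t (f(s,\tilde Y_s,\tilde Z_s)+c_s)\,ds$ as a martingale. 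The Malliavin-calculus computation following (\ref{A1}) -- martingale representation of $M^0$, Proposition 1.3.8 and the argument of Proposition 3.1.1 in \citet{Nu} to obtain (\ref{A2}), and then the Clark--Ocone formula -- shows that $\tilde M_t - \tilde Y_0 = \int_0^t \tilde Z_r\,dW_r$, so that $(\tilde Y,\tilde Z)$ is a genuine solution of the BSDE with driver $f(s,y,z)+c_s$ and terminal value $\xi$. Convexity of $f$ together with Jensen's inequality gives $c_s \geq 0$ a.s., whence the comparison theorem for BSDEs (Theorem 2.2 in \citet{EKPQ}) yields $E[\theta_t|\mathcal{F}^W_t] = \tilde Y_t \geq Y_t$; taking the essential infimum over admissible $\theta$ gives $\essinf_\theta E[\theta_t|\mathcal{F}^W_t] \geq Y_t$.

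\textbf{Attainment.} For the reverse inequality I would use $\theta = Y$ as a competitor. Under the technical assumptions of Proposition 5.3 in \citet{EKPQ} the control process satisfies $Z_t = (D^+Y)_t$, and $\xi$, $f$ are regular enough that $M^0 := \int_0^\cdot Z_s\,dW_s$ is a martingale with $M^0_T = \int_0^T Z_s\,dW_s \in \mathbb{D}^{1,2}$; rewriting (\ref{cBSDE}) as $-dY_t = f(t,Y_t,(D^+Y)_t)\,dt - dM^0_t$ then exhibits $Y$ as an $M^0$-solution of (\ref{antBSDE}). Since $Y$ is adapted, $E[Y_t|\mathcal{F}^W_t] = Y_t$, hence $\essinf_\theta E[\theta_t|\mathcal{F}^W_t] \leq Y_t$. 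Combining with the previous step gives the claimed identity, with the infimum attained at $\theta = Y$.

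\textbf{Main obstacle.} The delicate part is the Malliavin-calculus bookkeeping in the upper-bound step: verifying that $(D^+\theta)_t$ is well-defined for a.e.\ $t$ and lies in the space required by the definition of an $M^0$-solution, that the integrand in the martingale representation of $M^0$ may itself be Malliavin-differentiated so that (\ref{A2}) holds, and that the Clark--Ocone formula and the interchange of $D$ with conditional expectation (Proposition 1.2.8 in \citet{Nu}) apply to $\xi + \int_0^T E[f(s,\theta_s,(D^+\theta)_s)|\mathcal{F}^W_s]\,ds$. These are exactly the integrability conditions built into the definition of an $M^0$-solution and into the standing assumptions, so the checks are routine but must be carried out with care. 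A secondary point, which is precisely where the extra hypotheses of Proposition 5.3 in \citet{EKPQ} enter, is confirming the regularity of $Z$ needed for $M^0_T \in \mathbb{D}^{1,2}$ in the attainment step.
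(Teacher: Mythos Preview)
Your proposal is correct and follows essentially the same route as the paper: the proposition is stated as a summary of the preceding discussion in the appendix, and that discussion is precisely the two-step argument you outline (the Clark--Ocone/comparison argument for the upper bound, and the identification $Z_t=(D^+Y)_t$ from Proposition~5.3 of \citet{EKPQ} for attainment via $\theta=Y$ with $M^0=\int_0^\cdot Z_s\,dW_s$). Your ``main obstacle'' paragraph correctly isolates the technical checks that the paper leaves implicit.
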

Comparing this result with the discrete time result in Theorem \ref{thm:convexdual}, we immediately observe a major difference: In continuous time only the choice of a one-dimensional martingale $M^0$
is required, while in
discrete time one additionally needs to choose a $D$-dimensional martingale $M$. This phenomenon is easily explained. Notice first that, under at most technical conditions,
\begin{eqnarray*}
(D^+\theta)_t =\lim_{\epsilon\downarrow 0} \frac{1}{\epsilon} \int_t^{t+\epsilon} D_s \theta_{t+\epsilon}ds=
\lim_{\epsilon\downarrow 0}  \left(\frac{W_{t+\epsilon}-W_t}{\epsilon} \theta_{t+\epsilon} -
\frac{W_{t+\epsilon}-W_t}{\epsilon} \diamond  \theta_{t+\epsilon} \right),
\end{eqnarray*}
 where the diamond denotes the Wick product,
see Theorem 6.8 in \citet{NOP}. The first term on the right hand side corresponds to the expression $\beta_{i+1}\theta^{up}_{i+1}$ in (\ref{pathwise}), when
$\beta_{i+1}(t_{i+1}-t_i)$ equals the truncated Brownian increment over $[t_i,t_{i+1}]$. The second term on the right hand side has zero conditional expectation, because the Wick product interchanges with the conditional expectation, i.e.
$$
E\left[\left. \frac{W_{t+\epsilon}-W_t}{\epsilon} \diamond  \theta_{t+\epsilon} \right|\mathcal{F}^W_t\right]=E\left[\left. \frac{W_{t+\epsilon}-W_t}{\epsilon} \right|\mathcal{F}^W_t\right]\diamond E[\theta_{t+\epsilon} |\mathcal{F}^W_t]=0,
$$
see e.g. Lemma 6.20 in \citet{NOP}.
As one cannot expect that the Wick product $\beta_{i+1}\diamond \theta_{i+1}$ can be computed in closed form, a generic term with zero conditional expectation, namely the martingale increment
$M_{i+1}-M_i$, is subtracted
in (\ref{pathwise}). Due to the convexity of $f$, subtracting this generic term with zero conditional expectation pushes the solution of the recursion (\ref{pathwise}) upwards.

\end{document}